\begin{document}
\title{Fair Healthcare Rationing to Maximize Dynamic Utilities}
%
%\titlerunning{Abbreviated paper title}
% If the paper title is too long for the running head, you can set
% an abbreviated paper title here
%
\author{Aadityan Ganesh\inst{1}$^*$ \and
    Pratik Ghosal\inst{1}$^*$ \and
    Vishwa Prakash HV\inst{1}$^*$\and
    Prajakta Nimbhorkar\inst{1,2}$^*$}
% %
% \authorrunning{F. Author et al.}
% First names are abbreviated in the running head.
% If there are more than two authors, 'et al.' is used.
%
 \institute{Chennai Mathematical Institute, India \and
UMI ReLaX \\
\email{\{aadityanganesh,pratik,vishwa,prajakta\}@cmi.ac.in}}
\maketitle              % typeset the header of the contribution

\def\thefootnote{*}\footnotetext{The authors contributed equally to this work and are listed in alphabetical order}\def\thefootnote{\arabic{footnote}}

\begin{abstract}
Allocation of scarce healthcare resources under limited logistic and infrastructural facilities is a major issue in the modern society. We consider the problem of allocation of healthcare resources like vaccines to people or hospital beds to patients in an online manner. Our model takes into account the arrival of resources on a day-to-day basis, different categories of agents, the possible unavailability of agents on certain days, and the utility associated with each allotment as well as its variation over time. 

We propose a model where priorities for various categories are modelled in terms of utilities of agents.
We give online and offline algorithms to compute an allocation that respects eligibility of agents into different categories, and incentivizes agents not to hide their eligibility for some category. The offline algorithm gives an optimal allocation while the online algorithm gives an approximation to the optimal allocation in terms of total utility. Our algorithms are efficient, and maintain fairness among different categories of agents. Our models have applications in other areas like refugee settlement and visa allocation. We evaluate the performance of our algorithms on real-life and synthetic datasets. The experimental results show that the online algorithm is fast and performs better than the given theoretical bound in terms of total utility. Moreover, the experimental results confirm that our utility-based model correctly captures the priorities of categories.
\end{abstract}

\section{Introduction}
Healthcare rationing has become an important issue in the world amidst the COVID-19 pandemic.
At certain times the scarcity of medical resources like vaccines, hospital beds, ventilators, medicines especially in developing countries  raised the question of fair and efficient distribution of these resources. 
One natural approach is to define priority groups. For example,
for vaccination, the main priority groups considered include health care workers, workers in other essential services, and people with vulnerable medical conditions \cite{Persad20,Truog20}. Racial equity has been another concern \cite{Bruce21}. Having made the 
priority groups, it still remains a challenge to allocate resources within the groups in a transparent manner \cite{Emanuel20,WHO20}.
A New York Times article has mentioned this as one of the hardest decisions for health organizations \cite{Fink20}. In light of this,
it is a major problem to decide how to allocate medical resources fairly and efficiently while respecting the priority groups and other
ethical concerns.

The healthcare rationing problem has been recently addressed by market designers. In \cite{Pathak20}, the problem was framed as
a two-sided matching problem (see e.g. \cite{Roth90}). Their model has reserve categories each with its own priority ordering of people.
This ordering is based on the policy decisions made according to various ethical guidelines. It is shown in \cite{Pathak20} that running the Deferred Acceptance algorithm of Gale and Shapley \cite{GS62} has desired properties like eligibility compliance, non-wastefulness and respect to priorities. This approach of \cite{Pathak20} has been
recommended or adopted by organizations like the NASEM (National Academies of Sciences, Engineering, and Medicine) \cite{NASEM20}. It has also been recognized in medical literature \cite{Persad20,Sonmez20}, and is mentioned by the Washington Post \cite{WP}. The Smart Reserves algorithm of \cite{Pathak20} gives a maximum matching satisfying the desired properties mentioned earlier. However, it assumes a global priority ordering on people. In a follow-up work, \cite{Aziz21} generalize this to
the case where categories are allowed to have heterogeneous priorities. Their Reverse Rejecting (REV) rule, and its extension to Smart Reverse Rejecting (S-REV) rule are shown to satisfy the goals like eligibility compliance, respect to priorities, maximum size, non-wastefulness, and strategyproofness.

However, the allocation of healthcare resources is an ongoing process. On a day-to-day basis, new units arrive in the market and they
need to be allocated to people. The variation in the availability of medical resources over a period of time, and the possible unavailability of recipients on certain days is an important factor in making allocation decisions. For example, while allocating vaccines, the unavailability of people on certain days might lead to wastage of vaccines, especially if the units are reserved for categories a priori. The previous models do not encompass this dynamic nature of resources. Moreover, the urgency with which a resource needs to be allocated to an individual also changes over time. While priority groups or categories aim to model this by defining a priority order on people, defining a strict ordering is not practically possible. While dealing with a large population, defining a strict ordering on people is not desirable. For instance, in the category of old people, it is neither clear nor desirable to define a strict order on people of the same age 
and same vulnerabilities. Even if categories are allowed to have ties in their ordering, the ordering still provides only an ordinal ranking.

Our model provides the flexibility to have cardinal rankings in terms of prioritizing people by associating a {\em utility value} for each individual. Thus, in our work, categories do not define an ordering over people, instead, there is a utility value associated with allocation of the resource to each person. The goal is to find an allocation with maximum total utility while respecting category quotas. However, utilities can change over time. For instance, the advantage of allotting a ventilator to a person today might be far more beneficial than allotting it tomorrow. Similarly, vaccinating the vulnerable population as early as possible is much more desirable from a social perspective than delaying it to a later day. We model this through {\em dynamic utilities}. Thus, we consider utilities that diminish over time. The {\em discounting factor}  $0<\delta<1$ is multiplicative. Such exponential discounting is commonly used in economics literature \cite{Ramsey1928,Samuelson1937}. %A smaller discounting factor indicates more urgency of allocating a resource to a person early on while a discounting factor close to $1$ indicates that the individual can be made to wait for a few days in the interest of allocating the resource to another individual who needs it more urgently. Thus, for vaccination, the discounting factor for a healthy, young individual will be much higher compared to an old person with serious ailments. 
Our utility maximization objective can thus be seen as maximization of {\em social welfare}. Another advantage is that the division of available units into various categories is not static. It is dynamically decided depending on the supply of units and availability of people on each day. 

%To achieve the above goals, we propose a network-flow based model for allocating resources to people, also referred to as agents in this paper, where priorities among agents are modelled in terms of utilities. The utilities change over days, capturing the urgency to provide a resource to an agent. The allocation problem then becomes a maximum-flow computation problem.
Our algorithms to find a maximum utility allocation are based on network flows. They adhere to the following important ethical principles which were introduced by Aziz et al in \cite{Aziz21}:
\begin{enumerate}
    \item complies with the eligibility requirements 
    \item is strategyproof (does not incentivize agents to under-report the categories they qualify for or days what they are available on),
    \item is non-wasteful (no unit is unused but could be used by some eligible agent)
\end{enumerate}

Additionally our algorithms give an approximate maximum weight matchings, where the weights denote the utility value of a matching. We note that the current state-of-practice algorithms such as first-come first-serve or random ordering do not guarantee non-wastefullness as the matching returned by them may not be of maximum size. Furthermore, matchings returned by these algorithms could be of arbitrarily low total utility. Using category quotas and utility values, we provide a framework in which more vulnerable populations can be prioritized while maintaining a balance among the people vaccinated through each category on a day-to-day basis. 
%Our models and algorithms are also applicable in other settings like school admissions\cite{as03}, refugee settlement\cite{delacretaz2016refugee,bansak2018improving,delacretaz2016refugee}, visa allocation \cite{kato2013quotas,delacretaz2021processing,clark2007explaining,borjas1993immigration}, hospital residents problem \cite{Kojima19,Aziz21,Kamada15,Kamada17}  etc. A detailed discussion about the related work is given in the Appendix. 

\subsection{Related Work}
The topic of constrained matching problems has been an active area or research and it has been studied in the context of school choice and hospital residents problem apart from healthcare rationing \cite{Kojima19,Aziz21,Kamada15,Kamada17,Biro10,Goto16,Sankar21}. The setting with two-sided preferences has been considered in \cite{Hamada16,Kavitha14,Huang10,Kamiyama16}. The fairness and welfare objectives have been covered in a comprehensive manner in \cite{Moulin03}. 

Another application of the constrained matching problem is in the refugee resettlement problem. Refugee resettlement is a pressing matter in the twenty-first century where people have been forced to leave their country in order to escape war, persecution, or natural disaster. 
In the refugee resettlement process the refugee families are settled from asylum countries to the host countries where the families are given permanent residentship. The reallocation is done keeping in mind the necessities of the families as well as the infrastructure capacities of the host countries. Delacrétaz et al. \cite{delacretaz2016refugee} formalized refugee allocation as a centralized matching market design problem. 
The refugee allocation problems have been studied both in terms of matching problems with preferences \cite{andersson2016assigning,delacretaz2019matching,aziz2018stability,jones2017international,jones2018local,nguyen2021stability,sayedahmed2020refugee} and without preferences\cite{bansak2018improving,delacretaz2016refugee}. In the matching problem with preferences, the goal is to match the refugees to localities based on the preference of either one or both sides, while satisfying the multidimensional resettlement constraints. Delacretaz et al. considered the problem both in terms of with and without preferences. The problem without preference can be reduced to multiple multidimensional knapsack problems \cite{delacretaz2016refugee}. The branch-and-bound method can be used to find the exact solution. Bansak et al. \cite{bansak2018improving} used a combination of supervised machine learning and optimal matching to obtain a refugee matching that satisfies the constraints of refugees and localities.
The dynamic version of the refugee resettlement problem \cite{andersson2018dynamic,ahani2021dynamic,cilali2021location} has also been considered in literature. 

\subsection{Our Models}
We define our model below and then define its extension. %The restricted models have more efficient algorithms.
Throughout this paper, we consider vaccines as the medical resource to be allocated. People are referred to as agents.
Note that although the discussion assumes perishability 
of resources, it can easily be extended to non-perishable resources. %{\color{blue} Do we have to mention that our model is for perishable goods? Making the jump to non-perishable goods is not hard, but we have not talked about it here} although the same model and results are applicable to other resources as well.
\paragraph{Model 1: }\label{model_1}
Our model consists of a set of agents $A$, a set of categories $C$, and a set of days $D$. For day $d_j\in D$, there is a {\em daily supply} $s_j$ denoting the number of vaccine shots available for that day. For each category $c_i\in C$, %there are two types of quotas viz. {\em overall quota} $q_i$ and 
and each day $d_j\in D$, we define a {\em daily quota} $q_{ij}$. %Here, the overall quota $q_i$ denotes the maximum total number of vaccines that can be allocated through category $c_i$ over all the days, whereas 
Here $q_{ij}$ denotes the maximum number of vaccines that can be allocated for $c_i$ on day $d_j$. There is a priority factor $\alpha_k$ associated with an agent $a_k$. Let $\alpha_{\max}~=~\max_i\{\alpha_i~\mid~\alpha_i~\text{ is the priority factor of agent $a_i$}\}$ and $\alpha_{\min}~=~\min_i\{\alpha_i~\mid~\alpha_i~\text{ is the priority factor of agent $a_i$}\}$. Utilities have a {\em discount factor} $\delta\in (0,1)$ denoting the multiplicative factor with which the utilities for vaccinating agents reduce with each passing day. Thus if $a_k$ is vaccinated on day $d_j$, the utility obtained is $\alpha_k\cdotp\delta^j$. Each agent $a_k$ has an {\em availability vector} $v_k\in \{0,1\}^{|D|}$. The $j$th entry of $v_k$ is $1$ if and only if $a_k$ is available for vaccination on day $d_j$.% {\color{blue} perishability of vaccine not made clear I guess, while describing the model}. 

\paragraph{Model 2:}\label{model_2} Model 2 is an extension of Model 1 in the following way. The sets $A,C,D$ and the daily supply and daily quotas are the same as those in model 1. Apart from the daily quota, each category $c_i$ also has an {\em overall quota} $q_i$ that denotes the maximum total number of vaccines that can be allocated for category $c_i$ over all the days. Note that overall quota is also an essential quantity in 
applications like visa allocation and refugee settlement.

In both the models, a matching $M:A\rightarrow (C\times D)\cup \{\emptyset\}$ is a function denoting the day on which a person is vaccinated and the category through which it is done, such that the category quota(s) and daily supply values do not exceed on any day. Thus if we define variables $x_{ijk}$ such that $x_{ijk}=1$ if $M(a_k)=(c_i,d_j)$ and $x_{ijk}=0$ if $M(a_k)=\emptyset$, then
we have $\sum_{i,j} x_{ijk}\leq 1$ for each $k$, $\sum_{k,j}x_{ijk}\leq q_i$ for each $i$, $\sum_k x_{ijk} \leq q_{ij}$ for each $i,j$, and $\sum_{i,k} x_{ijk}\leq s_j$ for each $j$. Here $1\leq i \leq |C|, 1\leq j\leq |D|, 1\leq k\leq |A|$. If $M(a_k)=\emptyset$ for some $a_k\in A$, it means the person could not be vaccinated through our algorithm within $|D|$ days.

In both the models, the utility associated with $a_k$ is \(\alpha_k\cdotp\delta^{j-1}\) where $M(a_k)=(c_i,d_j)$. The goal is to find a matching that maximizes the total utility.  

%\paragraph{Restricted models:} We also consider various restrictions of the above mentioned model. 
%\begin{itemize}
%    \item {\bf Model 1: No overall quota} This model is the same as the general model except that there is no overall quota for the categories. 
%    \item {\bf Model 2: No daily quota } This model again has all the parameters from the general model except there are no daily quotas for categories.
%    \item {\bf Model 3: No daily supply} This model has all the parameters of the general model except the daily supply value. This model captures the case where the exact vaccine distribution among categories is decided a priory, and hence the daily supply is just the sum of the daily quotas of the categories. It does not appear as a separate parameter.
%\end{itemize}

\subsection{Our Contributions}
The utilities $\alpha_k$ and discounting factor $\delta$ have some desirable properties. If agent $a_k$ is to be given a higher priority over agent $a_\ell$, then we set $\alpha_k>\alpha_\ell$. On any day $d_j$, $\alpha_k\cdot \delta^j>\alpha_\ell\cdot \delta^j$. Moreover, the difference in the utilities of the two agents diminishes over time i.e. if $j<j'$ then $(\alpha_k-\alpha_\ell)\delta^j >(\alpha_k - \alpha_{\ell})\delta^{j'}$. Thus the utility maximization objective across all days vaccinates $a_k$ earlier than $a_\ell$. 

We consider both online and offline settings. The offline setting relies on the knowledge about availability of agents on all days. This works well in a system
where agents are required to fill up their availability in advance e.g. in case of planned surgeries, and visa allocations. The online setting involves knowing the availability of all the agents only on the current day as in a {\em walk-in} setting. 
Thus the availability of an agent on a day in future is not known.

We give an optimal algorithm for Model $1$ in the offline setting.. 
%We give online and offline algorithms for the restricted models, and an online algorithm for the general model.

\begin{theorem}\label{thm:off-opt}
There is a polynomial-time algorithm that computes an optimal solution for any instance of Model $1$ in the offline setting.
\end{theorem}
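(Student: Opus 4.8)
The plan is to reduce the utility-maximization problem for Model $1$ to a minimum-cost flow problem (equivalently, a maximum-weight $b$-matching) on a suitably layered network, and then to invoke the integrality and polynomial-time solvability of min-cost flow. Since every potential assignment of an agent $a_k$ to a day $d_j$ carries the fixed, precomputable weight $\alpha_k\cdot\delta^{j-1}$, the dynamic discounting introduces no difficulty beyond treating these numbers as edge weights; the real content is encoding the matching constraints of the model as capacity constraints.

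Concretely, I would build a directed network with a source $s$, a sink $t$, one node for each agent $a_k$, one node $(c_i,d_j)$ for each category--day pair, and one node for each day $d_j$. I add an edge $s\to a_k$ of capacity $1$ (each agent is vaccinated at most once); an edge $a_k\to (c_i,d_j)$ of capacity $1$ and weight $\alpha_k\cdot\delta^{j-1}$ whenever $a_k$ is eligible for $c_i$ and available on $d_j$ (i.e.\ the $j$-th entry of $v_k$ is $1$); an edge $(c_i,d_j)\to d_j$ of capacity $q_{ij}$; and an edge $d_j\to t$ of capacity $s_j$. The three capacity layers encode exactly the constraints $\sum_{i,j}x_{ijk}\le 1$, $\sum_k x_{ijk}\le q_{ij}$, and $\sum_{i,k}x_{ijk}\le s_j$ from the model definition.

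I would then establish a weight-preserving bijection between integral $s$--$t$ flows in this network and valid matchings $M$: the variables $x_{ijk}$ are read off as the flow on the edges $a_k\to(c_i,d_j)$, and the total weight of the flow equals the total utility $\sum_{i,j,k}x_{ijk}\,\alpha_k\delta^{j-1}$. Because all edge capacities are integers, the flow polytope is integral (its constraint matrix is a network incidence matrix, hence totally unimodular), so a maximum-weight flow can be taken integral without loss. Since every agent-edge weight is strictly positive, maximizing total weight is cleanly handled by adding an infinite-capacity, zero-cost return edge $t\to s$ and computing a minimum-cost \emph{circulation} with the agent-edge costs set to $-\alpha_k\delta^{j-1}$: negative-cost edges are saturated as far as feasibility permits, so the optimal circulation maximizes total utility. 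This is solvable in polynomial time by any strongly polynomial min-cost flow algorithm.

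The step I expect to require the most care is the simultaneous encoding of the two distinct daily constraints---the per-category-per-day quota $q_{ij}$ and the aggregate daily supply $s_j$---which is precisely why the network is split into a category--day layer feeding into a day layer rather than a single layer; verifying that this layering admits every valid matching and forbids every infeasible assignment is the crux. Once the bijection and integrality are in place, optimality and the polynomial running time follow directly from the correctness of min-cost flow, and the ethical properties can be read off the optimal flow: eligibility compliance is built into which agent-edges exist, while non-wastefulness follows because the strict positivity of weights forces an optimal flow to assign an unmatched eligible, available agent whenever free supply and quota coexist.
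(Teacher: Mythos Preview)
Your proposal is correct and follows essentially the same approach as the paper: both reduce Model~$1$ to a minimum-cost flow on a layered network with a day layer, a category--day layer, and an agent layer, with the per-agent, per-$(c_i,d_j)$, and per-day constraints encoded as capacities and the utilities as (negated) edge costs. The only cosmetic difference is orientation---the paper sends flow from $s$ through days and category--day nodes to agents and then to $t$, while you reverse this---and you are somewhat more explicit about the circulation trick and integrality, which the paper leaves implicit.
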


We also give algorithms for both Model $1$ and Model $2$ in the online setting.
%The online algorithms are useful in the {\em walk-in} setting where agents are not required to pre-book their appointments.
%Thus the online algorithms work in a setting where the availability on each day is revealed only at the beginning of that day.
We give theoretical guarantees on the performance of online algorithms in terms of their {\em competitive ratio} in comparison with the utility of an offline optimal solution. 

\begin{theorem}\label{thm:gen-on-same-utility}
There is an online algorithm (Algorithm~\ref{alg:online-greedy-m1}) that gives a competitive ratio of (i) $1+\delta$ for Model $1$ and (ii) of \(1+\delta + ({\alpha_{\max}}/{\alpha_{\min}})\delta\) for Model $2$ when $\delta$ is the common discounting factor for all agents. The algorithm runs in polynomial time.
\end{theorem}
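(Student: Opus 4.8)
The plan is to read Algorithm~\ref{alg:online-greedy-m1} as a day-by-day greedy: on each day $d_j$ it computes a maximum-utility feasible matching among the agents that are available on $d_j$ and are still unmatched, respecting the daily quotas $q_{ij}$, the daily supply $s_j$, and (in Model 2) the remaining overall quotas $q_i$. Since the discount $\delta^{j-1}$ is common to every agent on day $d_j$, maximizing day-$d_j$ utility is the same as maximizing $\sum_k \alpha_k$ over the agents matched that day. I will write $O_j$ and $G_j$ for the sets of agents matched on day $d_j$ by the offline optimum and by the greedy solution, so that the two objective values are $\sum_j \delta^{j-1}\sum_{a_k\in O_j}\alpha_k$ and $\sum_j \delta^{j-1}\sum_{a_k\in G_j}\alpha_k$, and compare them by a charging argument. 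Throughout I use $\mathrm{OPT}$ and $\mathrm{GREEDY}$ for these two totals.

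First I would settle part (i). Fix day $d_j$ and let $O_j'\subseteq O_j$ be the agents that $\mathrm{OPT}$ matches on $d_j$ but greedy has not matched before $d_j$; all of them are available on $d_j$, and $\mathrm{OPT}$'s own day-$d_j$ assignment restricts to a feasible matching of $O_j'$ under the daily quotas and the supply. Hence the greedy matching on $d_j$, being of maximum weight, has weight at least $\delta^{j-1}\sum_{a_k\in O_j'}\alpha_k$. Summing over $j$ and writing $O_j=B_j\sqcup O_j'$, where $B_j$ collects the agents $\mathrm{OPT}$ matches on $d_j$ that greedy matched on an earlier day $d_{j_G}$ with $j_G<j$, gives $\mathrm{GREEDY}\ge \mathrm{OPT}-\sum_j\delta^{j-1}\sum_{a_k\in B_j}\alpha_k$. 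The key point is that each agent lies in at most one $B_j$ (its unique $\mathrm{OPT}$-day), and for such an agent $\delta^{j-1}\le \delta\cdot\delta^{j_G-1}$ since $j\ge j_G+1$ and $\delta<1$; so the total charge is at most $\delta$ times the greedy utility of these agents, hence at most $\delta\cdot\mathrm{GREEDY}$. Rearranging $(1+\delta)\cdot\mathrm{GREEDY}\ge\mathrm{OPT}$ proves (i).

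For part (ii) the same scheme applies, except that $\mathrm{OPT}$'s restriction of $O_j'$ to day $d_j$ need not be feasible for greedy, because greedy may already have spent a category $c_i$'s overall quota $q_i$ on other agents. I would split $O_j'=O_j'^{\mathrm{feas}}\sqcup O_j'^{\mathrm{bl}}$, peeling off the \emph{blocked} agents that $\mathrm{OPT}$ routes through an already-exhausted category, and bound their total utility category by category. If greedy exhausts $c_i$, it has matched exactly $q_i$ agents through $c_i$, all on days no later than the exhaustion day $g$, so its utility through $c_i$ is at least $q_i\,\alpha_{\min}\,\delta^{g-1}$. Every agent that $\mathrm{OPT}$ routes through $c_i$ and that greedy leaves blocked is matched by $\mathrm{OPT}$ on a day strictly after $g$, so its utility is at most $\alpha_{\max}\,\delta^{g}$, and $\mathrm{OPT}$ uses $c_i$ at most $q_i$ times in total; hence the blocked utility through $c_i$ is at most $q_i\,\alpha_{\max}\,\delta^{g}$. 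The per-category ratio is therefore $(\alpha_{\max}/\alpha_{\min})\,\delta$, and summing over categories bounds the total blocked utility by $(\alpha_{\max}/\alpha_{\min})\,\delta\cdot\mathrm{GREEDY}$. Feeding this extra loss into the part (i) inequality yields $(1+\delta+(\alpha_{\max}/\alpha_{\min})\,\delta)\cdot\mathrm{GREEDY}\ge\mathrm{OPT}$. Note that $B_j$ and $O_j'^{\mathrm{bl}}$ are disjoint by construction, and the two loss terms may each be bounded by the whole of $\mathrm{GREEDY}$ independently, so adding them is legitimate. The polynomial running time is immediate, since each day requires only one maximum-weight bipartite matching (equivalently min-cost flow) computation.

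The step I expect to be the main obstacle is the Model 2 feasibility bookkeeping: defining $O_j'^{\mathrm{bl}}$ precisely enough that the remaining agents $O_j'^{\mathrm{feas}}$ can genuinely be matched simultaneously by greedy within its \emph{remaining} overall quotas — a category that is not fully exhausted still has only $r_i\ge 1$ slots left, which need not accommodate all unblocked day-$d_j$ agents $\mathrm{OPT}$ routes through it. Making the blocked set and the comparison matching interact correctly (and confirming no agent is charged both as an early-greedy agent and as a blocked agent) is where the real care lies; the clean ``days after exhaustion'' structure together with the bound $q_i\,\alpha_{\min}\,\delta^{g-1}$ on greedy's per-category utility is what makes the extra factor come out to exactly $(\alpha_{\max}/\alpha_{\min})\,\delta$.
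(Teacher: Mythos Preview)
Your argument for part~(i) is correct and in fact cleaner than the paper's. Both proofs split the optimum's agents into those greedy matched strictly earlier (your $B_j$, the paper's ``Type~1'') and the rest (your $O_j'$, the paper's ``Type~2''), and both charge the former to themselves with factor~$\delta$. The difference is in how the latter are handled: the paper constructs, day by day, an explicit injective map $f:X_i\to Y_i$ with $\alpha_{f(a)}\ge\alpha_a$ by analysing the path/cycle decomposition of $M_i\oplus N_i$; you simply observe that $O_j'$, together with $\mathrm{OPT}$'s own day-$d_j$ routing, is a feasible matching for greedy on day $d_j$, so greedy's day-$d_j$ weight dominates $\sum_{a_k\in O_j'}\alpha_k\delta^{j-1}$. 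Your sum inequality is exactly what the injection is used for, obtained in one line from the maximum-weight property, so your route is more elementary; the paper's buys an agent-level charging picture that is reused verbatim in Model~2.

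For part~(ii) the obstacle you flag is genuine and your sketch does not yet overcome it. Declaring an agent ``blocked'' only when its $\mathrm{OPT}$-category is \emph{fully} exhausted leaves the partially-depleted case uncovered: if greedy has $r_i\ge 1$ slots of $c_i$ left but $\mathrm{OPT}$ routes several $O_j'$-agents through $c_i$, your $O_j'^{\mathrm{feas}}$ need not be feasible for greedy, so the inequality $\mathrm{GREEDY}\ge\sum_j\delta^{j-1}\sum_{O_j'^{\mathrm{feas}}}\alpha_k$ is unjustified. The paper closes this gap not by refining the blocked set but by going back to the symmetric-difference structure. On a day with $|X_i|>|Y_i|$, it decomposes $M_i\oplus N_i$ into paths and cycles; the $|X_i|-|Y_i|$ ``excess'' agents sit on $N_i$-augmenting paths whose category endpoint $c_k$ is necessarily \emph{fully} exhausted in greedy (their Lemma~\ref{lem:saturated}: since $|M_i|<s_i$, failure to augment can only be due to $r_k=0$). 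Thus every excess agent is tied to a fully exhausted category and can charge one of the $q_k$ greedy agents there with factor at most $(\alpha_{\max}/\alpha_{\min})\delta$, while all non-excess Type~2 agents are handled exactly as in Model~1. The partial-exhaustion scenario simply never produces an excess agent, so it never needs separate bookkeeping. Replacing your set-based split $O_j'^{\mathrm{feas}}\sqcup O_j'^{\mathrm{bl}}$ with this path-based identification of excess agents is the missing ingredient.
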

We prove part $(i)$ of Theorem~\ref{thm:gen-on-same-utility} in Section~\ref{sec:analysis-model-1} whereas part $(ii)$ is proved in Appendix.
%In contrast to the above, we show in Section~\ref{sec:gen-on} that if the discount factor is not the same for all agents, then the competitive ratio can be arbitrarily large.
%\begin{theorem}\label{thm:diff-utility}
%If the discounting factor of each agent is chosen from the range $(0<\delta, \delta+ \alpha<1)$ then we have a 
%$(1 + \delta + \delta(1 + \frac{\alpha}{\delta})^{r+1})$-approximation algorithm, where $d=|D|$.
%\end{theorem}
\paragraph{Strategy-proofness:}
It is a natural question whether agents benefit by hiding their availability on some days. We show that the online algorithm is strategy-proof.
In this context, we analyze our online algorithm for Model $1$ from a game theoretic perspective. We exhibit that the offline setting has a {\em pure Nash equilibrium} that corresponds to the solution output by the online algorithm. For this, we assume that the tie-breaking among agents is done according to an arbitrary permutation $\pi$ of agents. 
\begin{theorem}\label{thm:nash}
Let an offline optimal solution that breaks ties according to a random permutation $\pi$ match agent $a_i$ on day $d_i$. Then for each agent $a_i$, reporting  availability exactly on day $d_i$ (unmatched agents mark all days as unavailable) is a pure Nash equilibrium. Moreover, the Nash equilibrium corresponds to a solution output by the online algorithm.
\end{theorem}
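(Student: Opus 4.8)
The plan is to treat the reported availability vectors as the agents' strategies, recalling that since the discount factor satisfies $0<\delta<1$ each agent strictly prefers being matched on an earlier day to a later one and prefers any match to remaining unmatched, and that an agent may only hide true availability, never fabricate it. Write $M$ for the offline optimal matching computed under the true availabilities with ties broken by $\pi$, so that $M$ places $a_i$ on day $d_i$, and let $\sigma$ denote the candidate equilibrium profile in which every matched agent reports the singleton $\{d_i\}$ and every unmatched agent reports $\emptyset$. First I would record the controlling structural fact: in $\sigma$ every agent has at most one available day, so in any feasible matching each agent is either placed on its single day or left out, and no relocation can trigger a chain of further relocations.

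Next I would prove a local-optimality lemma for $M$ under the true availabilities. Because $M$ is optimal (by Theorem~\ref{thm:off-opt}) and non-wasteful, no earlier day carries spare capacity in a category for which $a_i$ is eligible, since otherwise sliding $a_i$ to that day would strictly raise the total utility. Consequently any attempt to place $a_i$ on an earlier day $d_j$ must displace a single incumbent $a_\ell$ occupying a slot on $d_j$, one displacement sufficing because adding $a_i$ violates at most the relevant daily category quota and the daily supply, both repaired by removing one agent on $d_j$. Forming the competitor matching $M^{*}$ that moves $a_i$ from $d_i$ to $d_j$ and drops $a_\ell$, I would check that $M^{*}$ is feasible under the true availabilities and compute the utility change $\Delta=(\alpha_i-\alpha_\ell)\delta^{\,j-1}-\alpha_i\delta^{\,d_i-1}$. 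Optimality of $M$ forces $\Delta\le 0$, and when $\Delta=0$ the fact that $M$ is the $\pi$-chosen optimum means $\pi$ ranked $M$ above $M^{*}$; the analogous inequality with the $d_i$ term deleted handles an initially unmatched $a_i$, yielding $\alpha_i\le\alpha_\ell$.

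Then I would analyze an arbitrary unilateral deviation $R_i'$ by $a_i$ in $\sigma$. Since all agents other than $a_i$ still report a single day, every matching feasible under $(\sigma_{-i},R_i')$ is feasible under the true availabilities and differs from $M$ only by relocating $a_i$ together with at most one displaced agent per relocation. The local-optimality lemma then shows that no relocation to an earlier day strictly increases total utility, so $M$ itself, which stays feasible whenever $a_i$ still reports $d_i$, is an optimal matching of the deviated profile, while relocating $a_i$ later or dropping $a_i$ only lowers its payoff. It remains to rule out that the algorithm returns a tied competitor advancing $a_i$; here I would invoke consistency of the tie-break, namely that $\pi$ ranks any two fixed feasible matchings the same way irrespective of the profile being optimized. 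Since $\pi$ already preferred $M$ to each such competitor $M^{*}$ under the true availabilities, it again returns $M$ on the deviated profile, $a_i$ stays on $d_i$, and the deviation is unprofitable, establishing that $\sigma$ is a pure Nash equilibrium. I expect the $\Delta=0$ case, where two distinct optima must be separated purely by $\pi$, to be the main obstacle, so making the profile-independence of $\pi$ precise is the crux of the whole argument.

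Finally, for the correspondence with Algorithm~\ref{alg:online-greedy-m1}, I would run the online algorithm on $\sigma$ and process days in increasing order: on day $d_j$ the only available unmatched agents are exactly those that $M$ matched on $d_j$, and $M$ itself witnesses that this entire set respects the category quotas and the daily supply, so the greedy step matches all of them (possibly reassigning categories, which leaves each agent's day and hence utility unchanged). Inducting over the days shows that the online output coincides with $M$, so the equilibrium outcome is precisely the matching produced by the online algorithm, completing the proof.
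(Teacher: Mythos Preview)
Your argument has a real gap. You treat the mechanism as the offline optimum recomputed on the reported profile, and argue that $M$ remains optimal on $(\sigma_{-i},R_i')$ whenever $d_i\in R_i'$. But the dangerous deviation is one that \emph{drops} $d_i$ and keeps an earlier day $d_j$: then $M$ is infeasible, and your local-optimality inequality $\Delta=(\alpha_i-\alpha_\ell)\delta^{j-1}-\alpha_i\delta^{d_i-1}\le 0$ only compares the competitor with $M$, not with the matchings actually available once $d_i$ is hidden. If $\alpha_i>\alpha_\ell$ (which $\Delta\le 0$ certainly does not preclude), swapping $a_\ell$ out for $a_i$ on day $d_j$ strictly improves on $M\setminus\{a_i\}$, and a utility-maximizing mechanism on the deviated profile makes that swap. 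Concretely: one slot per day, $a_\ell$ truly available only on day $1$, $a_i$ on days $1$ and $2$, $\alpha_i=0.6$, $\alpha_\ell=0.5$, $\delta=0.9$; the offline optimum puts $a_\ell$ on day $1$ and $a_i$ on day $2$, yet if $a_i$ reports $\{1\}$ the re-optimized solution puts $a_i$ on day $1$---a profitable deviation. Your ``relocating later or dropping'' clause does not cover this case, so the Nash conclusion does not follow from your lemma.

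The paper's proof is organized quite differently. The game is played against the \emph{online} Algorithm~\ref{alg:online-greedy-m1} (the section is explicitly about strategy-proofness of the online algorithm), and the argument is day-local rather than via global optimality of $M$: if a deviating $a_p$ were matched on an earlier day $d_j$, the per-day maximum matching forces some incumbent $a_q$ of that day to drop out, and the paper observes that when Algorithm~\ref{alg:online-greedy-m1} processed day $d_j$ it already had $a_p$ in front of it and nonetheless selected $a_q$; hence, with the same tie-break $\pi$, the algorithm again prefers $a_q$ to $a_p$ after the deviation. Your final paragraph (the online algorithm on $\sigma$ reproduces $M$) is fine and corresponds to the ``Moreover'' clause, but the no-profitable-deviation part needs to be argued day by day along the paper's lines rather than by comparison with the globally optimal $M$.
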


\paragraph{Experimental Results:}
We also give experimental results in Section~\ref{sec:simulation} using real-world datasets. Apart from maximization of utilities, we also consider the number of days taken by the online algorithm for vaccinating high priority people. Our experiments show that
the online algorithm almost matches the offline algorithm in terms of both of these criteria.
\paragraph{Selection of utility values:} An important aspect of our model is that the choice of utility values does not affect the outcome as long as the utility values have the same numerical order as the order of priorities among agents. Thus the output of online as well as offline algorithm remains the same as long as $\alpha_k>\alpha_\ell$ whenever agent $a_k$ has a higher priority over agent $a_\ell$.

\section{Optimal Offline Algorithm for Model $1$}\label{sec:offline-optimal-algorithm}

The problem can be modelled as an instance of the minimum cost flow network. We define the minimum cost flow problem here for completeness. 

\paragraph{The minimum cost flow problem:} The input is a flow network $G=(V,E)$ as a directed graph with node set $V$, edge set $E$, capacities $c_e>0$ and cost $u_e\in \mathbb{R}$ on each edge $e\in E$, and a source $s$ and sink $t$. A flow $f:E\rightarrow \mathbb{R}$ is a valid flow in $G$ if $f(e)\leq c(e)$, and the incoming flow at any node except $s$ and $t$ equals the outgoing flow. The cost of a flow $f(e)$ along an edge $e$ is
$u_e\cdot f(e)$. A minimum cost flow in the network is the one that minimizes the sum of costs of the flow along all edges.

There are polynomial-time algorithms known for the minimum cost flow problem. Also, it is known that if all the capacities are integers, then the optimum flow is an integer. We refer the reader to \cite{Ahuja93} for the details of minimum cost flow. 
\paragraph{Reduction: }
The construction of the flow network is shown in Figure~\ref{fig:flow-nw}. The flow network consists of a source $s$, a sink $t$, nodes for each day $d_j$, each agent $a_k$ and nodes $c_{ij}$ for each $(c_i,d_j)\in C\times D$. Each edge $(s,d_j)$ has capacity $s_j$ denoting the daily supply for day $d_j$, each edge $(d_j,c_{ij})$ has capacity equal to $q_{ij}$, and all other edges have capacity $1$. All the edges are directed. Additionally, each $(c_{ij},a_k)$ edge has cost $-u_k\cdot\delta_k^j$ whereas other edges have cost $0$.

\begin{proof}(of Theorem~\ref{thm:off-opt})
We show that a minimum cost flow $f$ in the flow network corresponds to a maximum utility matching in the given instance. The integrality of minimum cost flow implies that each edge incident on $t$ can have a flow of either $0$ or $1$. For each $k$, if $f(a_k,t)=1$, then 
there is exactly one $c_{ij}$ such that $f(c_{ij},a_k)=1$. Set $M(a_k)=(c_i,d_j)$ in the
corresponding matching $M$. Similarly, for any matching $M$, A corresponding flow  can be shown as follows. If $M(a_k)=(c_i,d_j)$ then set $f(a_k,t)=f(c_{ij},a_k)=1$, and set $f(s,d_j)$ equal to the number of agents vaccinated on day $d_j$, and $f(d_j,c_{ij})$ equal to the number of agents vaccinated on day $d_j$ through category $c_i$. It is clear that this is a valid flow in the network, and the negation of the cost of the flow is the same as the utility of the corresponding matching.
\end{proof}

\begin{figure}
  \centering
  \scalebox{1}{\begin{tikzpicture}[myn/.style={circle,thick,draw,inner sep=.1cm,outer sep=2pt},scale = 1,
      terminal/.style={circle, draw=black, fill=black!10, thick, minimum size=8mm},
      ]

      \node[terminal] (s) at (0,0) {$s$};
      \node[myn] (d1) at (3,2) {$d_1$};
      \node[myn] (d2) at (3,-2) {$d_2$};
      \node[myn] (c11) at (6,4) {$c_{11}$};
      \node[myn] (c12) at (6,2.5) {$c_{12}$};
      \node[myn] (c13) at (6,1) {$c_{13}$};
      \node[myn] (c21) at (6,-1) {$c_{21}$};
      \node[myn] (c22) at (6,-2.5) {$c_{22}$};
      \node[myn] (c23) at (6,-4) {$c_{23}$};
      \node[myn] (a1) at (9,2) {$a_1$};
      \node[myn] (a2) at (9,0) {$a_2$};
      \node[myn] (a3) at (9,-2) {$a_3$};
      \node[terminal] (t) at (12,0) {t};
      \draw[thick, ->] (s)--(d1);
      \draw[thick, ->] (s)--(d2);
      \draw[thick, ->] (d1)--(c11);
      \draw[thick, ->] (d1)--(c12);
      \draw[thick, ->] (d1)--(c13);
      \draw[thick, ->] (d2)--(c21);
      \draw[thick, ->] (d2)--(c22);
      \draw[thick, ->] (d2)--(c23);
      \draw[thick, ->] (c11)--(a1);
      \draw[thick, ->] (c13)--(a1);
      \draw[thick, ->] (c21)--(a1);
      \draw[thick, ->] (c23)--(a1);
      \draw[thick, ->] (c11)--(a2);
      \draw[thick, ->] (c12)--(a2);
      \draw[thick, ->] (c22)--(a3);
      \draw[thick, ->] (c23)--(a3);
      \draw[thick, ->] (a1)--(t);
      \draw[thick, ->] (a2)--(t);
      \draw[thick, ->] (a3)--(t);
    \end{tikzpicture}}
  \caption{Flow network for finding a maximum utility matching in Model 1}
\end{figure}
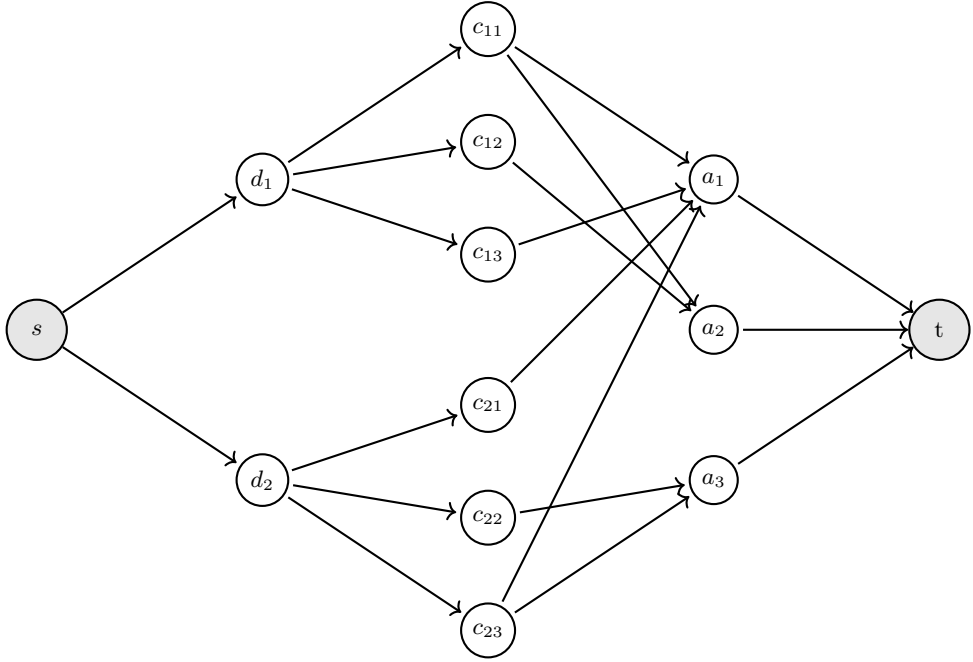\label{fig:flow-nw}
% note for self: a_1 is available on both days and belong to the first and third category. a_2 is available on day 1 and belongs to category 1 and 2. a3 is available only in category 2 and 3 and available on day 2 only

\section{Algorithms for Model~1}\label{sec:model1}
We give a flow based polynomial-time optimal offline algorithm for Model~1 in Appendix. Here, we give an online algorithm for the same which achieves a competitive ratio of \(1+\delta\), where \(\delta\) is the discounting factor of the agents.

\subsection{Online Algorithm for Model~1}
We present an online algorithm which greedily maximizes utility on each day. We show that this algorithm indeed achieves a competitive ratio of \(1+\delta\).\\

\emph{Outline of the Algorithm:} On each day \(d_i\), starting from day \(d_1\), we construct a bipartite graph \({H}_i=({A}_i\cup C, E_i, w_i)\) where \({A}_i\) is the set of agents who are available on day \(d_i\) and are not vaccinated  earlier than day $d_i$. 
%An edge $(a_j, c_k) \in E_i$ if $a_i$ is available on day $d_i$ and belongs to the category $c_k$.
Let the weight of the edge \((a_j, c_k) \in E_i\) be \(w_i(a_j, c_k) = \alpha_j.\delta^{i-1}\). We define capacity of the category $c _k \in C$ as  \(b'_{i,k}\). In this graph, our algorithm finds a maximum weighted b-matching of size not more than the daily supply value \(s_i\). 
% This can be found in polynomial time \cite{lawler2001combinatorial}. (Line 15 can be done in polytime)

\begin{algorithm}
  \caption{Online Algorithm for Vaccine Allocation}
  \label{alg:online-greedy-m1}
\textbf{Input:} An instance \(I\) of Model~1 \\
\textbf{Output:} A matching $M:A\to(C\times D)\cup\{\varnothing\}$ \\
\begin{algorithmic}[1] %[1] enables line numbers
    \STATE Let \(D,A,C\) be the set of Days, Agents and Categories respectively.
    \STATE \(M(a_j)\gets \varnothing\) for each $a_j \in A$
    \FOR {day \(d_i\) in \(D\)}
    \STATE \({A}_i \gets \{ a_j \in A \mid a_j\) is available on \(d_i\) and \(a_j\) is not vaccinated\}
    \STATE \(E_i\gets\{(a_j,c_k) \in {A}_i \times C \mid a_j\)is eligible to be vaccinated under category $c_k$ \}
    \FOR  {\((a_j,c_k)\) in \(E_i\)}
    \STATE Let \(w_i(a_j,c_k) \gets \alpha_j\delta^{i-1}\) 
    \ENDFOR
    \STATE Construct weighted bipartite graph \({H}_i=({A}_i\cup C, E_i,w_i)\).
    \FOR {\(c_k\) in \(C\)}
    \STATE \(b'_{i,k} \gets q_{ik}\) \COMMENT {Where \(q_{ik}\) is the daily quota}
    \ENDFOR
    \STATE Find maximum weight b-matching \(M_i\)  in \(H_i\) of size at most \(s_i\). \COMMENT{Where \(s_i\) is the daily supply}
    \FOR {each edge \((a_j,c_k)\) in \(M_i\)}
    \STATE \(M(a_j) \gets (c_k,d_i)\) \COMMENT{Mark \(a_j\) as vaccinated on day \(d_i\) under category \(c_k\)}
    \ENDFOR
    \ENDFOR\\
    \RETURN \(M\)
\end{algorithmic}
\end{algorithm}

The following lemma shows that the maximum weight b-matching computed in Algorithm~\ref{alg:online-greedy-m1} is also a maximum size b-matching of size at most \(s_i\). 

\begin{lemma}\label{lemma:max-size-max-wt}
  The maximum weight b-matching in \(H_i\) of size at most \(s_i\) is also a maximum size b-matching of size at most \(s_i\). 
\end{lemma}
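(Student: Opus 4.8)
The plan is to exploit the fact that on a fixed day $d_i$ every edge incident to an agent $a_j$ carries the same weight $w_i(a_j,c_k)=\alpha_j\delta^{i-1}$, which depends only on the agent and is strictly positive. Consequently the weight of any b-matching $M$ in $H_i$ equals $\delta^{i-1}\sum_{a_j\in A(M)}\alpha_j$, where $A(M)$ is the set of matched agents; since each agent is matched to at most one category, $A(M)$ has exactly $|M|$ elements. Maximizing weight therefore amounts to picking a set of matched agents that maximizes $\sum\alpha_j$, and the goal is to argue that this automatically maximizes $|M|$ as well.

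The argument I would use is a standard augmenting-path exchange, adapted to the b-matching setting. First I reduce to ordinary bipartite matching by splitting each category $c_k$ into $b'_{i,k}=q_{ik}$ copies, each of unit capacity and each joined to exactly the agents eligible for $c_k$; b-matchings of $H_i$ then correspond to ordinary matchings of this split graph, preserving both size and weight (every copy-edge inherits the weight $\alpha_j\delta^{i-1}$ of its agent endpoint). Now suppose, for contradiction, that the maximum-weight b-matching $M_i$ of size at most $s_i$ is not of maximum size, and let $M^{*}$ be a maximum-size b-matching of size at most $s_i$, so that $|M^{*}|>|M_i|$. Their symmetric difference in the split graph is a disjoint union of alternating paths and even cycles, and since $|M^{*}|>|M_i|$ it must contain an $M_i$-augmenting path $P$.

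I would then augment $M_i$ along $P$. Because the split graph is bipartite, $P$ has odd length with one endpoint an agent unmatched in $M_i$ and the other endpoint an unused category copy; augmenting leaves every interior agent matched (it merely switches the category copy it uses) and newly matches the exposed agent endpoint $a$. Hence the resulting b-matching $M_i'$ satisfies $|M_i'|=|M_i|+1$, and its weight exceeds that of $M_i$ by exactly $\alpha_{a}\delta^{i-1}>0$. It remains only to check feasibility: since $M_i$ is assumed not to be of maximum size, $|M_i|<|M^{*}|\le s_i$, so $|M_i'|=|M_i|+1\le s_i$ and $M_i'$ still respects the daily supply bound (the category-quota and per-agent constraints are automatic from the split-graph matching). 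This produces a feasible b-matching of size at most $s_i$ with strictly larger weight, contradicting the maximality of $M_i$, and the lemma follows.

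The step I expect to be the main obstacle is the bookkeeping that makes the exchange argument legitimate in the b-matching setting: justifying the reduction to the split graph so that the classical ``a larger matching yields an augmenting path'' theorem applies, and, separately, verifying that the global size cap $s_i$ is not violated after augmenting. The latter is handled above by the counting inequality $|M_i|<s_i$ rather than by any local vertex capacity, which is the one place where the total-supply constraint enters the analysis. Everything else---the weight computation along $P$ and the positivity of $\alpha_a\delta^{i-1}$---is routine once these structural points are in place.
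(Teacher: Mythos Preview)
Your argument is correct and follows essentially the same route as the paper: both observe that every edge incident to an agent has the same positive weight, so augmenting along an $M_i$-augmenting path strictly increases the weight while keeping the size at most $s_i$, contradicting maximality of $M_i$. Your version is somewhat more explicit---the split-graph reduction to make the augmenting-path existence rigorous for b-matchings and the verification that $|M_i|<s_i$---but the underlying idea is identical to the paper's terser proof.
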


\begin{proof}
  We prove that applying an augmenting path in \(H_i\) increases the weight of the matching. Consider a matching $M_i$ in $H_i$ such that \(M_i\) is not of maximum size and \(|M_i|<s_i\). Let $\rho = (a_1,c_1,a_2,c_2,\cdots,a_k,c_k)$ be an $M_i$-augmenting path in $H_i$. We know that every edge incident to an agent has the same weight in $H_i$. If we apply the augmenting path $\rho$, the weight of the matching increases by the weight of the edge \((a_1,c_1)\). This proves that a maximum weight matching in \(H_i\) of size at most $s_i$ is also a maximum size b-matching of size at most $s_i$.
\end{proof}

\subsection{Charging scheme}\label{sec:analysis-model-1}
We compare the solution obtained by Algorithm~\ref{alg:online-greedy-m1} with the optimal offline solution to get the worst-case competitive ratio for Algorithm~\ref{alg:online-greedy-m1}. Let $M$ be the output of Algorithm~\ref{alg:online-greedy-m1} and $N$ be an optimal offline solution. To compare $M$ and $N$, we devise a {\em charging scheme} by which, each agent $a_p$ matched in $N$ {\em charges} a unique agent $a_q$ matched in $M$. The amount charged, referred to as the {\em charging factor} here is the ratio of utilities obtained by matching $a_p$ and $a_q$ in $M$ and $N$ respectively.

Properties of the charging scheme:
\begin{enumerate}
\item Each agent matched in $N$ charges exactly one agent matched in $M$,
\item Each agent \(a_q\) matched in $M$ is charged by at most two agents matched in $N$, with charging factors at most $1$ and $\delta$. This implies that the utility of $N$ is at most $(1+\delta)$ times the utility of $M$.
\end{enumerate}

We divide the agents matched in $N$ into two types. Type $1$ agents are those which are matched in $M$ on an earlier day compared to that in $N$. Thus $a_p \in A$ is a Type $1$ agent if $a_p$ is matched on day $d_i$ in $M$ and on day $d_j$ in $N$, such that $i<j$. The remaining agents are called Type $2$ agents.
Our charging scheme is as follows:

\begin{enumerate}
\item Each Type~$1$ agent \(a_p\) charges themselves with a charging factor $\delta$, since the utility associated with them in $N$ is at most $\delta$ times that in $M$. 
  
\item Here onwards, we consider only Type $2$ agents and discuss the charging scheme associated with them.

  Let $X_i$ be the set of Type $2$ agents matched on day $d_i$ in $N$, and let $Y_i$ be the set of agents matched on day $d_i$ in $M$. Since Algorithm~\ref{alg:online-greedy-m1} greedily finds a maximum size b-matching of size at most \(s_i\), and as each edge in the b-matching corresponds to a unique agent, we show the following lemma holds:
\end{enumerate}
  \begin{lemma} For each \(d_i \in D\),  the set \(|X_i| \le |Y_i|\).\end{lemma}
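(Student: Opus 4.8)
The plan is to exhibit the $N$-assignments of the agents in $X_i$ as a feasible b-matching inside the online graph $H_i$, and then to invoke Lemma~\ref{lemma:max-size-max-wt} to conclude that the online b-matching $M_i$, being of maximum size among b-matchings of size at most $s_i$, is at least as large as $X_i$.

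First I would argue that $X_i \subseteq A_i$. Take any Type~$2$ agent $a_p \in X_i$. Since $a_p$ is matched on day $d_i$ in $N$, it is available on $d_i$. Since $a_p$ is Type~$2$, it is not matched in $M$ strictly before day $d_i$; in particular $a_p$ is not yet vaccinated by Algorithm~\ref{alg:online-greedy-m1} when day $d_i$ is processed. Hence $a_p$ satisfies both conditions for inclusion in the candidate set $A_i$ constructed on day $d_i$, so $a_p \in A_i$.

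Next I would show that the restriction of $N$ to day $d_i$ and to the agents of $X_i$ is a feasible b-matching in $H_i$ of size at most $s_i$. For each $a_p \in X_i$, let $(c_{k(p)},d_i)$ denote its match in $N$. By eligibility compliance of $N$, agent $a_p$ is eligible under $c_{k(p)}$, and since $a_p \in A_i$ the edge $(a_p,c_{k(p)})$ is present in $E_i$. These edges form a b-matching because the capacity constraints are inherited from $N$: for every category $c_k$ the number of $X_i$-agents assigned to $c_k$ is at most the total number of agents $N$ assigns to $c_k$ on day $d_i$, which is at most $q_{ik}=b'_{i,k}$; and $|X_i|$ is at most the total number of agents $N$ vaccinates on day $d_i$, which is at most $s_i$. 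This yields a feasible b-matching in $H_i$ of size $|X_i|\le s_i$.

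Finally I would apply Lemma~\ref{lemma:max-size-max-wt}. The online algorithm computes a maximum weight b-matching $M_i$ in $H_i$ of size at most $s_i$, which by that lemma is also a maximum size b-matching of size at most $s_i$. Since the previous step produces a feasible b-matching of size $|X_i|\le s_i$, maximality gives $|M_i|\ge |X_i|$; and because each edge of $M_i$ corresponds to a distinct agent, $|Y_i|=|M_i|\ge |X_i|$, which is the claim. The only point requiring care is the feasibility check in the third paragraph: one must verify that discarding the Type~$1$ agents matched by $N$ on day $d_i$ still leaves a matching that lives inside $H_i$ and respects the category and supply capacities. This is precisely where the definition of Type~$2$ (which guarantees $X_i\subseteq A_i$) and the monotonicity of the capacity constraints under passing to subsets are both needed.
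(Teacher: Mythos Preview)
Your proposal is correct and follows essentially the same approach as the paper's proof: show $X_i\subseteq A_i$ via the Type~$2$ property, observe that the restriction of $N$ to $X_i$ on day $d_i$ yields a feasible b-matching in $H_i$ respecting the same quotas $q_{ik}$ and supply $s_i$, and conclude by maximality of $M_i$. The only difference is that you spell out the feasibility check and the invocation of Lemma~\ref{lemma:max-size-max-wt} more explicitly than the paper does.
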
 

\begin{proof}
  Since $X_i$ contains only Type $2$ agents matched in $N_i$, the agents in $X_i$ are not matched by $M$ until day $i-1$. Therefore $X_i \subseteq A_i$, where $A_i$ is defined in Algorithm~\ref{alg:online-greedy-m1}. 
  The daily quota and the daily supply available for computation of $N_i$ and $M_i$ is the same i.e. $q_{i,k}$, and $s_i$ respectively.
  %The  of categories  We know that the capacity of each category in Model~1 is precisely her daily quota. Therefore $N_i$ is a matching in $H_i$ of size at most $s_i$. Since 
  By construction, $M_i$ is a matching that matches maximum number of agents in $A_i$, up to an upper limit of $s_i$, $|X_i| \leq |Y_i|$. 
\end{proof}
  To obtain the desired competitive ratio we design an injective mapping according to which, each agent \(a_p\) in \(X_i\) can uniquely charge an agent \(a_q\) in \(Y_i\) such that \(\alpha_p \le \alpha_q\). The following lemma shows that such an injective mapping always exists.

  \begin{lemma}\label{lemma:injection-exists}
    There exists an injective mapping \(f:X_i\to Y_i\) such that if \(f(a_p) = a_q\), then \(\alpha_p \le \alpha_q\). 
  \end{lemma}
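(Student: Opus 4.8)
The plan is to reduce the existence of the priority-respecting injection to a counting (Hall-type) condition, and then to establish that condition from the optimality of the online b-matching. First I would observe that an injection $f\colon X_i\to Y_i$ for which $f(a_p)=a_q$ implies $\alpha_p\le\alpha_q$ is precisely a matching saturating $X_i$ in the bipartite graph on $X_i\cup Y_i$ whose edges join $a_p\in X_i$ to every $a_q\in Y_i$ with $\alpha_q\ge\alpha_p$. By Hall's theorem such a matching exists iff every $S\subseteq X_i$ has $|N(S)|\ge|S|$. Because the neighbourhood of each $a_p$ is the upward-closed \emph{threshold set} $\{a_q\in Y_i:\alpha_q\ge\alpha_p\}$, the neighbourhood of any $S$ equals $\{a_q\in Y_i:\alpha_q\ge\tau\}$ with $\tau=\min_{a_p\in S}\alpha_p$, and the worst case is attained by the full threshold set $\{a\in X_i:\alpha_a\ge\tau\}$. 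Hence it suffices to prove, for every priority threshold $\tau$, that $|\{a\in X_i:\alpha_a\ge\tau\}|\le|\{a\in Y_i:\alpha_a\ge\tau\}|$; that is, $Y_i$ dominates $X_i$ in sorted priority order.

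To prove this domination I would exploit the matroid structure of the feasible allocations on day $d_i$. The subsets of $A_i$ that can be simultaneously matched while respecting the category quotas $q_{i,k}$ form the independent sets of a transversal matroid (replace each $c_k$ by $q_{i,k}$ parallel copies), and imposing the daily-supply cap $|S|\le s_i$ is the truncation of this matroid to rank $s_i$; thus the feasibly matchable agent sets form a matroid $\mathcal{M}_i$. Since every edge at an agent $a_j$ in $H_i$ carries the same weight $\alpha_j\delta^{i-1}$, the weight of a b-matching depends only on \emph{which} agents are matched, so computing a maximum-weight b-matching of size at most $s_i$ is exactly selecting a maximum-$\alpha$-weight independent set of $\mathcal{M}_i$. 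By Lemma~\ref{lemma:max-size-max-wt} the set $Y_i$ is also of maximum size, hence a basis of $\mathcal{M}_i$; and $X_i$ is independent in $\mathcal{M}_i$, being a subset of the day-$d_i$ matches of the feasible solution $N$ (so it respects the quotas) with $|X_i|\le|Y_i|\le s_i$.

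The core step is then a single exchange argument. Fix $\tau$ and let $H_\tau=\{a:\alpha_a\ge\tau\}$. I claim $|Y_i\cap H_\tau|$ equals the rank of $H_\tau$ in $\mathcal{M}_i$. If not, $Y_i\cap H_\tau$ is not maximal inside $H_\tau$, so some $h\in H_\tau\setminus Y_i$ keeps $(Y_i\cap H_\tau)+h$ independent; the unique circuit of $Y_i+h$ must then contain an element $e\notin H_\tau$ (otherwise that circuit would lie inside the independent set $(Y_i\cap H_\tau)+h$), and since $\alpha_e<\tau\le\alpha_h$ the basis $Y_i+h-e$ has strictly larger weight, contradicting the optimality of $Y_i$. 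Consequently $X_i\cap H_\tau$, being independent in $\mathcal{M}_i$ and contained in $H_\tau$, satisfies $|X_i\cap H_\tau|\le\mathrm{rank}(H_\tau)=|Y_i\cap H_\tau|$, which is exactly the threshold inequality required above, and Hall's condition then yields the injection $f$.

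I expect the exchange argument of the last paragraph to be the main obstacle, since it is where the daily-supply truncation and the category quotas must interact cleanly: the point to handle with care is that the fundamental circuit created by adding a high-priority element necessarily contains a low-priority one, and that ties among equal priorities do not affect the rank identity. An alternative, matroid-free route would argue the same threshold inequality directly by augmenting an alternating path of the symmetric difference of $M_i$ and $N_i$ in $H_i$ to swap a low-priority matched agent for an unmatched high-priority one, but the matroid formulation keeps the bookkeeping shortest.
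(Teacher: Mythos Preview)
Your proof is correct, and it follows a genuinely different route from the paper's. The paper works directly with the symmetric difference $M_i\oplus N_i$ in an auxiliary bipartite graph on $(X_i\cup Y_i)\cup C$: it decomposes $M_i\oplus N_i$ into edge-disjoint paths and even cycles, and in each component builds the injection by hand. Agents on cycles or in the interior of paths map to themselves; on an even path with agent endpoints $a_1$ (matched in $M_i$) and $a_k$ (matched in $N_i$) the inequality $\alpha_1\ge\alpha_k$ follows because otherwise $M_i\oplus\rho$ would beat $M_i$; and on an $N_i$--$N_i$ odd path the endpoint agent has weight below every $M_i$-edge (else swapping would again beat $M_i$), so it may be sent to any leftover $Y_i$-agent. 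Your argument instead abstracts the day-$d_i$ feasibility as a truncated transversal matroid, reduces the injection to a threshold Hall condition, and derives that condition from the standard fact that a maximum-weight basis meets every upward-closed set in full rank. The paper's approach is elementary and constructive, yielding the explicit charging map used later; yours is shorter, hides the case analysis inside matroid exchange, and would transfer verbatim to any other matroidal day-constraint. Both rely on the same optimality leverage (an improving swap contradicts maximality of $M_i$); they just package it differently.
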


\begin{proof}
    Let $N_i$ and $M_i$ respectively be the restrictions of $N$ and $M$ to day $d_i$. We construct an auxiliary bipartite graph $G_i$ where $X_i\cup Y_i$ form one bipartition and categories form another bipartition. The edge set is $N_i\cup M_i$. Then we set the capacity of $c_k$ in $G_i$ to be $b_{i,k}=q_{i,k}$. 

    The charging scheme is as follows. Consider the symmetric difference $M_i\oplus N_i$. It is known that  $M_i\oplus N_i$ can be decomposed into edge disjoint paths and even cycles \cite{NasreR17}.
    
    Consider a component \(C\) which is an even cycle as shown in Fig~\ref{fig:cycle-charging}. Since each agent \(a_p\) in \(C\) has both \(M_i\) edge and \(N_i\) edge indecent on it, agent \(a_p\) in \(X_i\) charges her own image in \(Y_i\) with a charging factor of \(1\).

    \begin{figure}[h]
    \begin{subfigure}[b]{0.49\textwidth}
      \centering
      \scalebox{0.5}{
      \begin{tikzpicture}[scale=1]
        
        %% vertex labels
        \node at (0,-0.4) {\(a_1\)};
        \node at (1,-0.4) {\(c_1\)};
        \node at (2.4,1) {\(a_2\)};
        \node at (2.4,2) {\(c_2\)};
        \node at (1,3.4) {\(a_3\)};
        \node at (0,3.4) {\(c_3\)};
        \node at (-1.4,2) {\(a_4\)};
        \node at (-1.4,1) {\(c_4\)};
        %%% edges
        \draw[very thick,red] (0,0) -- (1,0);
        \draw[very thick,red] (2,1) -- (2,2);
        \draw[very thick,red] (1,3) -- (0,3);
        \draw[very thick,red] (-1,2) -- (-1,1);

        \draw[very thick,blue] (1,0) -- (2,1);
        \draw[very thick,blue] (2,2) -- (1,3);
        \draw[very thick,blue] (0,3) -- (-1,2);
        \draw[very thick,blue] (-1,1) -- (0,0);

        \path[->,>={Stealth[scale=1.2]}, every loop/.style={min distance=2cm, looseness=40}, thin] (0,0)  edge  [in=-140,out=-40,loop] node [label=below:{1}] {} (0,0);
        
        \path[->,>={Stealth[scale=1.2]}, every loop/.style={min distance=2cm, looseness=40}, thin] (2,1)  edge  [in=-60,out=40,loop] node [label=below:{1}] {} (2,1);
        
        \path[->,>={Stealth[scale=1.2]}, every loop/.style={min distance=2cm, looseness=40}, thin] (1,3)  edge  [in=120,out=40,loop] node [label=above:{1}] {} (1,3);
        
        \path[->,>={Stealth[scale=1.2]}, every loop/.style={min distance=2cm, looseness=40}, thin] (-1,2)  edge  [in=-140,out=130,loop] node [label=below left:{1}] {} (-1,2);

        %% vertices
        \draw[fill=black] (0,0) circle (2pt);
        \draw[fill=black] (1,0) circle (2pt);
        \draw[fill=black] (2,1) circle (2pt);
        \draw[fill=black] (2,2) circle (2pt);
        \draw[fill=black] (1,3) circle (2pt);
        \draw[fill=black] (0,3) circle (2pt);
        \draw[fill=black] (-1,2) circle (2pt);
        \draw[fill=black] (-1,1) circle (2pt);

        % \end{scope}
      \end{tikzpicture}}
      \caption{Agents in cycles charge themselves with a charging factor or \(1\)}
      \label{fig:cycle-charging}
    \end{subfigure}
    \begin{subfigure}[b]{0.49\textwidth}
      \centering
      \scalebox{0.5}{
      \begin{tikzpicture}
        
        %% vertex labels
        \node at (-1.3,1) {\(a_1\)};
        \node at (0.3,0) {\(c_1\)};
        \node at (1.3,1) {\(a_2\)};
        \node at (2.3,0) {\(c_2\)};
        \node at (3.3,1) {\(a_3\)};
        \node at (4.3,0) {\(c_3\)};

        \node at (6.5,0) {\(c_{k-2}\)};
        \node at (7.6,1) {\(a_{k-1}\)};
        \node at (8.6,0) {\(c_{k-1}\)};
        \node at (9.3,1) {\(a_k\)};
        %%% edges
        \draw[very thick,red] (0,0) -- (1,1);
        \draw[very thick,red] (2,0) -- (3,1);
        \draw[very thick,red] (6,0) -- (7,1);
        \draw[very thick,red] (8,0) -- (9,1);
        
        \draw[very thick,blue] (1,1) -- (2,0);
        \draw[very thick,blue] (0,0) -- (-1,1);
        \draw[very thick,blue] (3,1) -- (4,0);
        \draw[very thick,blue] (7,1) -- (8,0);

        \path[->,>={Stealth[scale=1.2]}, every loop/.style={min distance=2cm, looseness=40}, thin] (1,1)  edge  [in=135,out=45,loop] node [label=below:{1}] {} (1,1);
        \path[->,>={Stealth[scale=1.2]}, every loop/.style={min distance=2cm, looseness=40}, thin] (3,1)  edge  [in=135,out=45,loop] node [label=below:{1}] {} (3,1);
        \path[->,>={Stealth[scale=1.2]}, every loop/.style={min distance=2cm, looseness=40}, thin] (7,1)  edge  [in=135,out=45,loop] node [label=below:{1}] {} (7,1);

        \path[->,>={Stealth[scale=1.3]}, thin] (9,1)  edge  [in=45,out=135] node [label=above right:{\({\alpha_k}/{\alpha_1}\)}] {} (-1,1);

        %% vertices
        \draw[fill=black] (-1,1) circle (2pt);
        \draw[fill=black] (0,0) circle (2pt);
        \draw[fill=black] (1,1) circle (2pt);
        \draw[fill=black] (2,0) circle (2pt);
        \draw[fill=black] (3,1) circle (2pt);
        \draw[fill=black] (4,0) circle (2pt);
        \draw[fill=black] (4.5,0.5) circle (1pt);
        \draw[fill=black] (5,0.5) circle (1pt);
        \draw[fill=black] (5.5,0.5) circle (1pt);
        \draw[fill=black] (6,0) circle (2pt);
        \draw[fill=black] (7,1) circle (2pt);
        \draw[fill=black] (8,0) circle (2pt);
        \draw[fill=black] (9,1) circle (2pt);
        
      \end{tikzpicture}}
      \caption{Agents who are matched in both \(N_i\) and \(M_i\) charge themselves. Agent~\(a_k\) charges \(a_1\) with a factor or \({\alpha_k}/{\alpha_1}\). Red edges represent \(N_i\) and blue edges represent \(M_i\)}
      \label{fig:path-charging}
    
    \end{subfigure}
    \caption{Charging schemes}
    \end{figure}
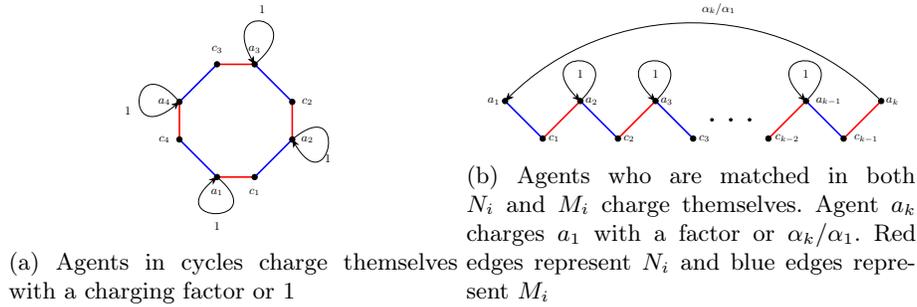

    Now, Consider a component which is a path \(\rho\). There are two cases.
    
    \begin{enumerate}
    \item {\em Case 1: The path $\rho$ has an even length: }   If $\rho$ starts and ends at a category node, then each agent along the path is matched  in both \(N_i\) and \(M_i\). Hence, all such agents can charge themselves with a charging factor of \(1\).
    Suppose $\rho$ starts and ends at an agent as shown in Fig~\ref{fig:path-charging} i.e. \(\rho=(a_1,c_1,a_2,c_2,\cdots,a_{k-1},c_{k-1},a_k)\). Let \(a_1\) be matched in \(M_i\) and \(a_k\) is matched in \(N_i\). Then, \(\alpha_1\) must be greater than or equal to \(\alpha_k\). Otherwise from Lemma~\ref{lemma:max-size-max-wt}, \(M_i \oplus \rho\) is a matching of higher weight - which contradicts the fact that \(M_i\) is the maximum weight matching. Now, every agent in \(\rho\) except \(a_1\) and \(a_k\) charge themselves with a charging factor of \(1\) and \(a_k\) charges \(a_1\) with a charging factor of \({\alpha_k}/{\alpha_1}\).
    \item {\em Case 2: The path $\rho$ has an odd length:} Then either $\rho$ begins and ends with an $M_i$ edge or with an $N_i$ edge. If $\rho$ starts and ends with an $M_i$ edge, then every agent along the path who is matched in \(N_i\) is also matched in \(M_i\). Therefore all the agents on $\rho$ charge themselves.

Consider the case when $\rho$ starts with an \(N_i\) edge. Since $c_k$ is an end-point of $\rho$ with an $N_i$-edge, $c_k$ must have more agents matched to it in $N_i$ than that in $M_i$. So $c_k$ cannot be saturated in $M_i$.

%Now, depending on the saturation of $c_k$ in $M_i$ we have two subcases:
    
   % \emph{Subcase 1:} Suppose $c_k$ is saturated in $M_i$. That is, the number of \(M_i\) edges incident on \(c_k\) is equal to its capacity \(b_{i,k}\). But we know that the number of \(N_i\) edges incident on \(c_k\) is at most \(b_{i,k}\). This contradicts the fact that $c_k$ is an end-point of a path in $M_i \oplus N_i$. Therefore \emph{subcase 1} is not possible.
    As \(M_i\) is a maximum size matching [\ref{lemma:max-size-max-wt}], we cannot augment $M_i$ to $M_i\oplus \rho$ in $G_i$ even though both endpoints are unsaturated. This can happen only because the daily supply is met. That is $|M_i| = s_i$.  As $a_1$ is vaccinated in category $c_1$ in $N_i$, we claim that the weight $w(a_1, c_1)$ is less than every other edge in $M_i$. This is because if there exists an edge $e \in M_i$ such that $w(e) < w(a_1, c_1)$, we can remove the edge $e$ from $M_i$ and apply the augmenting path $\rho$ to get a matching with a higher weight, which is a contradiction. Therefore, as $w(a_1, c_1)$ is less than every other edge in \(M_i\), agent \(a_1\) can safely charge any agent \(a_q\) who is matched in \(M_i\). Since \(|M_i| \ge |N_i|\), we are guaranteed to have sufficient agents in \(N_i\) for charging.  
    \end{enumerate}
  \end{proof}

\emph{Order of charging among Type~2 agents:} First, every agent who has both $M_i$ and $N_i$ edges indecent on it, charges herself. Next every agent who is an end-point of an even-length path charges the agent represented by the other end-point. The rest of the agents are end-points of an odd-length path matched in $N_i$. We proved that the edges incident on these agents have a weight smaller than every edge in $M_i$. They can charge any agent of $M_i$ who has not been charged yet by any agent of $N_i$, as stated above.
%Finally the agents who are part of an odd-length path in $N_i \oplus M_i$, 
%\textbf{charges any agent who is vaccinated by $M_i$ and is not charged by anyone yet.}\textcolor{red}{This last step is not very concrete. Why should such an agent exist?} 

\begin{proof}[of Theorem~\ref{thm:gen-on-same-utility}~(i)]
  Let $a_q$ be an agent who is vaccinated by the online matching \(M\) on day $i$. Then $a_q$ can be charged by at most two agents matched in \(N\). Suppose $a_q$ is vaccinated by the optimal matching \(N\) on some day $i' >i$. Assume that the agent $a_p$ of type 2 who also charges $a_q$. If the priority factor of $a_q$ and $a_p$ are $\alpha_q$ and $\alpha_p$ respectively, then   
  \begin{align*}
    &\frac{\alpha_p.\delta^i + \alpha_q.\delta^{i'}}{\alpha_q.\delta^i}\  =\  \left(\frac{\alpha_p}{\alpha_q}\right)^i + {\delta^{i'-i}} \leq\  {1 + \delta}. 
  \end{align*}
The last inequality follows as \(0 < \alpha_p \leq \alpha_q < 1,  \text{ and }i' > i\).  Therefore the utility obtained by $a_p$ and $a_q$ in $M_i$ is atmost $1 + \delta$ times the the utility of $a_q$ in $M_i$. Therefore the competitive ratio of Algorithm~\ref{alg:online-greedy-m1} is at most ${1 + \delta}$.  
\end{proof}

In the Appendix, we show a tight example which achieves this compititve ratio.

Since the daily supply of day \(d_1\) is \(1\), vaccinating \(a_1\) maximizes the utility gained on the first day. Hence there exists a run of Algorithm~\ref{alg:online-greedy-m1} where \(a_1\) is  vaccinated under category \(c_1\) on day \(d_1\). In this run, agent \(a_2\) cannot be vaccinated on day \(d_2\) as she is unavailable on that day. Hence, total utility gained by the online allocation is \(\alpha_1\). Whereas in a optimal allocation scheme all the agents can be vaccinated. We vaccinate agent \(a_2\) on day \(d_1\) under category \(c_2\), agent \(a_1\) on day \(d_2\) under category \(c_1\). This sums to a total utility of \(\alpha_1+\alpha_1\delta\). Therefore the competitive ratio is $\frac{\alpha_1+\alpha_1\delta}{\alpha_1} = 1 + \delta$. 

% [Done][Check Once] Before: max size max-wt

% Next  find an inj mapping b/w Xi and Yi with the following property:
% [Done] Prperty: If a_i \in X_i maps to aj in Yi then delta_i must be smaller than delta_j

% Following lemmas : We can always find such a mapping.

% lemma~1: b-matching is normal matching and vice versa

% lemma~2: charging scheme:
% proof : symmetric difference bla... bla... case analysis

% main thm: 1+max delta aprox
% proof.

% Tight example.

\subsection{Tight example for the Online Algorithm}
\begin{figure}[!h]
  \centering
  \scalebox{1}{
  \begin{tikzpicture}[
    on_node/.style={circle, draw=red!30, fill=red!20, ultra thin, minimum size=4mm},
    opt_node/.style={circle, draw=green!30, fill=green!20, ultra thin, minimum size=4mm},
    null_node/.style={circle, draw=white!10, fill=white!10, ultra thin, minimum size=4mm},
    node distance=0.2cm,->,>=stealth',
    ]
    %Nodes
    \node (a1) {\(a_1\)};

    \node[on_node] (a1d1) [right=of a1] {$\alpha_1$};
    \node[opt_node] (a1d2) [right=1cm of a1d1] {\(\alpha_1\delta\)};

    \node[opt_node] (a3d1) [below=of a1d1] {$\alpha_1$}; 
    \node[null_node] (a3d2) [below=of a1d2] {-};

    \node (a3) [left=of a3d1] {\(a_2\)};

    \node (d1) [above=of a1d1] {\(d_1\)};
    \node (d2) [above=of a1d2] {\(d_2\)};

    % Edges
    \draw[->] (a1d2.west) -- (a1d1.east);
    \draw[->] (a3d1) to [out=30,in=310] (a1d1);
  \end{tikzpicture}}
  \caption[Tight Example]{A tight example with competitive ratio \(1 + \delta \). Online allocation indicated in red, Optimal allocation indicated in green and arrows indicate charging} 
  \label{fig:tight-model-1}
\end{figure}
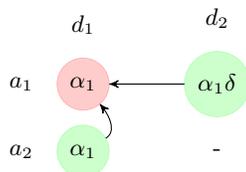

The following example shows that the competitive ratio of Algorithm~\ref{alg:online-greedy-m1} is tight. Let the set of agents \(A=\{a_1, a_2\}\) and categories \(C=\{c_1, c_2\}\). Agent \(a_1\) is eligible under \(\{c_1,c_2\}\) and agent \(a_2\) is eligible only under \(\{c_2\}\).  The daily~supply: \(s_{1}=1 \text{ and } s_{2}=1\). The daily quota of each category on each day is set to \(1\). The  priority factor for both the agents  is~\(\alpha_1\). Assume that \(a_1\) is available on both the days whereas the agent \(a_2\) is available only on the first day.
Figure~\ref{fig:tight-model-1} depicts this example. 

\section{Online Algorithm for Model~$2$}\label{sec:onlineapprox}
We present an online algorithm which greedily maximizes utility on each day. We assume that the discounting factor of the agents is \(\delta\). Moreover each agent $a_k$ has a priority factor $\alpha_k$. Let $\alpha_{\max} = \max_i\{\alpha_i \mid \alpha_i \text{ is the priority factor of agent $a_i$}\}$ and $\alpha_{\min} = \min_i\{\alpha_i \mid \alpha_i \text{ is the priority factor of agent $a_i$}\}$. We show that this algorithm indeed achieves a competitive ratio of \(1+\delta + \frac{\alpha_{\max}}{\alpha_{\min}}\delta\). 

\emph{Outline of the Algorithm:} On each day \(d_i\), starting from day \(d_1\), we construct a bipartite graph \({H}_i=({A}_i\cup C, E_i, w_i)\) where set \({A}_i\) is the set of agents who are available on day \(d_i\) and are not vaccinated earlier than day $d_i$.  Let the weight of the
edge $(a_j , c_k) \in E_i$ be $w_i(a_j , c_k) = \alpha_j .\delta^{i-1}$. Let \(b'_{i,k}\) represent the capacity of \(c_k\in C\) in \(H_i\). In this graph, our algorithm finds a maximum weighted b-matching of size not more than the daily supply value \(s_i\). This can be found in polynomial time \cite{lawler2001combinatorial}. Lemma~\ref{lemma:max-size-max-wt} proves that the maximum weight b-matching is also a maximum cardinality b-matching of $H_i$.  

\begin{algorithm}
\caption{Online Algorithm for Vaccine Allocation}
\label{alg:online-greedy-m2}
\hspace*{\algorithmicindent} \textbf{Input:} An instance \(I\) of Model~2 \\
\hspace*{\algorithmicindent} \textbf{Output:} An allocation \(M:A\to(C\times D)\cup\{\varnothing\}\)

\begin{algorithmic}[1]
  \STATE Let \(D,A,C\) be the set of Days, Agents and Categories respectively.
  \STATE \(M(a_j)\gets \varnothing\) for each agent $a_j \in A$
  \STATE \(r_k \gets q_k\) for each category $c_k \in C$
  \FOR {day \(d_i\) in \(D\)}
  \STATE ${A}_i \gets \{ a_j \in A \mid a_j$ is available on $d_i$ and $a_j$ is not vaccinated\}
  \STATE $E_i=\{(a_j,c_k) \in {A}_i \times C \mid a_j$ is eligible to be vaccinated under category $c_k \}$
  \STATE Construct bipartite graph \({H}_i=({A}_i\cup C, E_i)\).
  \FOR {\(c_k\) in \(C\)}
  \STATE \(b'_{i,k} \gets min(q_{ik}, r_k)\) \COMMENT{Capacity for each \(c_k\) in \(H_i\)}
  \ENDFOR
    \STATE Find maximum weight b-matching \(N_i\)  in \(H_i\) of size at most \(s_i\).
  \FOR {each edge \((a_j,c_k)\) in \(M_i\)}
  \STATE \(M(a_j) \gets (c_k,d_i)\) \COMMENT{Mark \(a_j\) as vaccinated on day \(d_i\) under category \(c_k\)}
  \STATE \(r_k \gets r_k -1 \) \COMMENT {Update remaining overall quota}
  \ENDFOR
  \ENDFOR\\
  \RETURN \(M\)
\end{algorithmic}
\end{algorithm}

\subsection{Outline of the charging scheme}
We compare the solution obtained by Algorithm~\ref{alg:online-greedy-m2} with the optimal offline solution to get the worst-case competitive ratio for Algorithm~\ref{alg:online-greedy-m2}. Let $M$ be the output of Algorithm~\ref{alg:online-greedy-m2} and $N$ be an optimal offline solution. To compare $M$ and $N$, we devise a {\em charging scheme} similar to that in Section~\ref{sec:analysis-model-1}, by which each agent $a$ matched in $N$ {\em charges} a unique agent $a'$ matched in $M$. The amount charged, referred to as the {\em charging factor} here is the ratio of utilities obtained by matching $a$ and $a'$ in $M$ and $N$ respectively.

Properties of the charging scheme:
\begin{enumerate}
    \item Each agent matched in $N$ charges exactly one agent matched in $M$,
    \item Each agent matched in $M$ is charged by at most three agents matched in $N$, with charging factors at most $1,\delta$ and $\frac{\alpha_{\max}}{\alpha_{\min}}\delta$. This implies that the utility of $N$ is at most $(1+\delta+\frac{\alpha_{\max}}{\alpha_{\min}}\delta)$ times the utility of $M$.
\end{enumerate}

We divide the agents matched in $N$ into two types. Type $1$ agents are those which are matched in $M$ on an earlier day compared to that in $N$. Thus $a\in A$ is a Type $1$ agent if $a$ is matched on day $d_i$ in $M$ and on day $d_j$ in $N$, such that $i<j$. The remaining agents are called Type $2$ agents.
Our charging scheme is as follows:
    
\begin{enumerate}
    \item Type $1$ agents charge themselves with a charging factor $\delta$, since the utility associated with them in $N$ is at most $\delta$ times that in $M$. 
    
    \item Here onwards, we consider only Type $2$ agents and discuss the charging scheme associated with them.

Let $X_i$ be the set of Type $2$ agents matched on day $d_i$ in $N$, and let $Y_i$ be the set of agents matched on day $d_i$ in $M$. 

\begin{enumerate}
    \item {\em Case 1: $|X_i|\leq |Y_i|$: } From Lemma \ref{lemma:injection-exists} we claim that each agent $a_p \in X_i$ charges an agent in $a_q \in Y_i$ with $\alpha_p \leq \alpha_q$. Therefore the agents in $X_i$ charge the agents in $Y_i$ with a charging factor of $1$.

\item {\em Case 2: $|X_i|=|Y_i|+z, z>0$: } Let $N_i$ and $M_i$ respectively be the restrictions of $N$ and $M$ to day $d_i$. We construct an auxiliary bipartite graph $G_i$ where $X_i\cup Y_i$ form one bipartition and categories form another bipartition. The edge set is $N_i\cup M_i$. For a category $c_k$, let $n_{j,k}$ and $m_{j,k}$ be the number of agents matched in $N$ and $M$ respectively, under category $c_k$ on day $d_j$. Then we set the quota of $c_k$ in $G_i$ to be $b_{i,k}=\min\{q_{i,k}, \max\{q_k-\sum_{j=1}^{i-1}n_{j,k}, q_k-\sum_{j=1}^{i-1} m_{j,k}\}\}$. This is the maximum of the quotas of $c_k$ that were available for computation of $N_i$ and $M_i$ respectively.
        
The charging scheme is given by the following. Consider the symmetric difference $M_i\oplus N_i$. Since $|N_i|=|M_i|+z$, there are exactly $z$ edge-disjoint alternating paths in $M_i\oplus N_i$ that start and end with an edge of $N$ \cite{NasreR17}. Let $\rho=\langle a_1,c_1,a_2,\ldots, a_k, c_k\rangle$ be one such path. Then $a_2,\ldots,a_{k-1}$ are matched in both $M_i$ and $N_i$, so they charge themselves with a charging factor of $1$. From Lemma \ref{lemma:injection-exists}, the agent $a_1$ charges $a_k$ with charging factor of at most $1$. It remains to decide whom $a_k$ charges.
        
Since $\rho$ terminates at $c_k$ with an $N_i$-edge, the number of agents matched to $c_k$ in $N_i$ is more than those matched to $c_k$ in $M_i$. In Lemma~\ref{lem:saturated}, we show that this can happen only because of exhaustion of $q_k$ in Algorithm~\ref{alg:online-greedy-m2} on or before day $d_i$. %since $|M_i|, |N_i|\leq s_i$ and $M_i$ is a maximum matching on day $d_i$. 
So agent $a_k$ can charge some agent $a_l$ matched to $c_k$ in $M$ on an earlier day, with charging factor $\frac{\alpha_{k}}{\alpha_{l}}\delta \leq \frac{\alpha_{\max}}{\alpha_{\min}}\delta$.
\end{enumerate}
\end{enumerate}

\begin{lemma}\label{lem:saturated}
If node $c_k$ is an end-point of a path $\rho$ in $G_i$, then $q_k$ is exhausted in Algorithm \ref{alg:online-greedy-m2} on or before day $d_i$. 
\end{lemma}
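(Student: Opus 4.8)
The plan is to argue by contradiction: assume that $q_k$ is \emph{not} exhausted in Algorithm~\ref{alg:online-greedy-m2} on or before day $d_i$, and derive that the day-$i$ matching $M_i$ computed by the algorithm fails to be of maximum size, contradicting Lemma~\ref{lemma:max-size-max-wt}. Throughout I use $n_{i,k}$ and $m_{i,k}$ for the number of agents matched to $c_k$ on day $d_i$ in $N_i$ and $M_i$ respectively, and I recall that at the start of day $d_i$ the algorithm assigns capacity $b'_{i,k}=\min\{q_{ik},\,q_k-\sum_{j=1}^{i-1}m_{j,k}\}$.

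First I would extract the local imbalance at $c_k$. Since $c_k$ is the endpoint of the path $\rho$ and $\rho$ terminates at $c_k$ with an $N_i$-edge, the number of $N_i$-edges incident to $c_k$ strictly exceeds the number of $M_i$-edges; formally $n_{i,k}>m_{i,k}$ (edges lying in both $M_i$ and $N_i$ contribute equally and cancel, so the surplus endpoint forces the strict inequality). Next I would translate the contradiction hypothesis into capacity slack. If $q_k$ is not exhausted after day $d_i$, then $q_k-\sum_{j=1}^{i}m_{j,k}>0$, i.e. $m_{i,k}<q_k-\sum_{j=1}^{i-1}m_{j,k}$; and the daily-quota constraint on $N$ gives $m_{i,k}<n_{i,k}\le q_{ik}$. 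Combining the two bounds yields $m_{i,k}<\min\{q_{ik},\,q_k-\sum_{j=1}^{i-1}m_{j,k}\}=b'_{i,k}$, so $c_k$ is left unsaturated by $M_i$ with respect to the very capacity the algorithm used.

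The augmentation step then completes the argument. Because $\rho$ carries one more $N_i$-edge than $M_i$-edge, its existence already witnesses $|N_i|>|M_i|$; together with the supply bound $|N_i|\le s_i$ this gives $|M_i|<s_i$, so there is slack in the daily-supply bound as well. The other endpoint $a_1$ of $\rho$ is a Type~$2$ agent matched in $N_i$, hence $a_1\in A_i$ and $a_1$ is exposed in $M_i$ (otherwise $\rho$ would extend past $a_1$). Applying $\rho$ to $M_i$ (deleting its $M_i$-edges and inserting its $N_i$-edges) therefore produces a b-matching in $H_i$ that saturates $a_1$, adds one unit at the unsaturated $c_k$, leaves all interior nodes unchanged, respects every category capacity $b'_{i,\cdot}$, and has size $|M_i|+1\le s_i$. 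This strictly larger feasible b-matching contradicts Lemma~\ref{lemma:max-size-max-wt}, proving that $q_k$ must already be exhausted on or before day $d_i$.

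I expect the main obstacle to be pinning down which constraint actually blocks the algorithm at $c_k$: a priori the surplus $n_{i,k}>m_{i,k}$ could be caused either by the overall quota $q_k$ or by the daily supply $s_i$ binding. The crux is the observation that whenever such a path exists we have $|M_i|<s_i$, which rules out the daily-supply explanation and isolates $q_k$ (and not the daily quota $q_{ik}$, which is excluded by $m_{i,k}<q_{ik}$) as the only possible bottleneck; establishing these inequalities cleanly from the capacity definition $b'_{i,k}$ is the delicate part of the proof.
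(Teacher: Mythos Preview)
Your argument is correct and follows essentially the same route as the paper: both show that if the overall quota $q_k$ were not exhausted, then $c_k$ would be unsaturated with respect to the capacity $b'_{i,k}$ used in $H_i$, and $\rho$ would be an augmenting path contradicting the maximality of $M_i$ (via Lemma~\ref{lemma:max-size-max-wt}). Your write-up is in fact more explicit than the paper's in separating the two inequalities $m_{i,k}<q_{ik}$ and $m_{i,k}<q_k-\sum_{j<i}m_{j,k}$ before combining them into $m_{i,k}<b'_{i,k}$.

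One small imprecision: the sentence ``Because $\rho$ carries one more $N_i$-edge than $M_i$-edge, its existence already witnesses $|N_i|>|M_i|$'' is not a valid deduction in isolation, since a single unbalanced component of $M_i\oplus N_i$ does not determine the global sign of $|N_i|-|M_i|$. The paper instead invokes the ambient Case~2 hypothesis $|X_i|=|Y_i|+z$ (equivalently $|N_i|=|M_i|+z$ in $G_i$) directly to get $|M_i|<s_i$; you should do the same rather than appeal to $\rho$ alone.
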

\begin{proof}
Suppose $c_k$ be an endpoint of $\rho$ in $G_i$. The number of agents matched to $c_k$ in $N_i$ is more than those matched to $c_k$ in $M_i$. We know that the daily supply $s_i$ of the day $d_i$ is an upperbound for both $|M_i|$ and $|N_i|$. Since $|N_i| = |M_i| + z$, we have $|M_i| < s_i$. From Algorithm \ref{alg:online-greedy-m2} we know that $M_i$ is a maximum-size b-matching in $H_i$ of size at most \(s_i\). If the capacity of $c_k$ is not saturated in $H_i$, then we can augment the path $\rho$ contradicting the maximality of $M_i$.  Since $c_k$ has more edges of $M_i$ than $N_i$ incident to it, from the definition of $b_{i,k}$, category $c_k$ must have exhausted the overall quota $q_k$ in Algorithm \ref{alg:online-greedy-m2} on or before day $d_i$. 
\end{proof}

\subsection {Tight Example}
The following example shows that the competitive ratio of Algorithm~\ref{alg:online-greedy-m2} is tight. Let set of agents \(A=\{a_1, a_2, a_3\}\) and categories \(C=\{c_1, c_2\}\). Agent \(a_1\) is eligible under \(\{c_1,c_2\}\). Agent \(a_2\) is eligible only under \(\{c_1\}\) and agent \(a_3\) is eligible only under \(\{c_2\}\).  The daily~supply: \(s_{1}=1 \text{ and } s_{2}=2\). Overall~quotas: \(q_1=1 \text{ and } q_2=2\). The daily quota of each category on each day is set to \(1\). The utility discounting factor for each agent is~\(\delta\).  The priority factor of the agent $a_i$ is $\alpha_i$ for $i = 1,2,3$. We assume that $0 \leq \alpha_1 = \alpha_3 < \alpha_2 \leq 1$. Agent \(a_1\) is available on both the days.  Agent \(a_3\) is available only on the first day, whereas agent $a_2$ is available only on the second day.  
Figure~\ref{fig:tight-general} depicts this example. 
\begin{figure}
  \centering
  \begin{tikzpicture}[
    on_node/.style={circle, draw=red!30, fill=red!20, ultra thin, minimum size=4mm},
    opt_node/.style={circle, draw=green!30, fill=green!20, ultra thin, minimum size=4mm},
    null_node/.style={circle, draw=white!10, fill=white!10, ultra thin, minimum size=4mm},
    node distance=0.2cm,->,>=stealth',
    ]
    %Nodes
    \node (a1) {\(a_1\)};
    \node (a2) [below=1cm of a1] {\(a_2\)};
    \node (a3) [below=1cm of a2] {\(a_3\)};

    \node[on_node] (a1d1) [right=of a1] {$\alpha_1$};
    \node[opt_node] (a1d2) [right=1cm of a1d1] {\(\alpha_1\delta\)};
    
    \node[null_node] (a2d1) [right=of a2] {-};
    \node[opt_node] (a2d2) [right=1.35cm of a2d1] {\(\alpha_2\delta\)};
    
    \node[opt_node] (a3d1) [right= of a3] {$\alpha_3$}; 
    \node[null_node] (a3d2) [right=1.35cm of a3d1] {-};

    \node (d1) [above=of a1d1] {\(d_1\)};
    \node (d2) [above=of a1d2] {\(d_2\)};

    % Edges
    \draw[->] (a1d2.west) -- (a1d1.east);
    \draw[->] (a2d2) -- (a1d1);
    \draw[->] (a3d1) to [out=30,in=310] (a1d1);
  \end{tikzpicture}
  \caption[Tight Example]{A tight example with competitive ratio \(1+\delta+\frac{\alpha_2}{\alpha_1}\delta\). Online allocation indicated in red, Optimal allocation indicated in green and arrows indicate charging} 
  \label{fig:tight-general}
\end{figure}
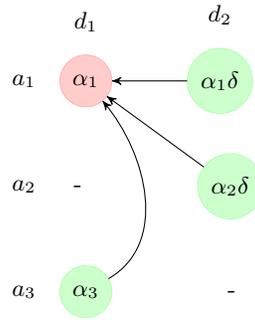

Since the daily supply of day \(d_1\) is \(1\), vaccinating \(a_1\) maximizes the utility gained on the first day. Hence there exists a run of Algorithm~\ref{alg:online-greedy-m2} where \(a_1\) is  vaccinated under category \(c_1\) on day \(d_1\). In this run, agent \(a_2\) cannot be vaccinated on day \(d_2\) as she is eligible only under category \(c_1\) and overall quota of category \(c_1\) is exhausted. Hence, total utility gained by the online allocation is \(\alpha_1\). Whereas in a optimal allocation scheme all the agents can be vaccinated. Vaccinate agent \(a_3\) on day \(d_1\) under category \(c_2\), agent \(a_1\) and \(a_2\) on day \(d_2\) under categories \(c_2,c_1\) respectively. This sums to a total utility of \(\alpha_3+\alpha_1\delta+{\alpha_2}\delta\). Therefore the competitive ratio of the online algorithm is $\frac{\alpha_3+\alpha_1\delta+{\alpha_2}\delta}{\alpha_1} = \frac{\alpha_1+\alpha_1\delta+{\alpha_2}\delta}{\alpha_1} = 1+\delta+\frac{\alpha_{\max}}{\alpha_{\min}}\delta$. The first equality holds as $\alpha_1 = \alpha_3$. The second equality holds as $\alpha_{\max} = \alpha_2$ and $\alpha_{\min} = \alpha_1$.

 \section{Strategy-proofness of the online algorithm}
We give the details of the Pure Nash Equilibrium here.
% 1. Offline optimal algorithm is prone to strategic manipulation where agents can benefit by hiding availability. 
% 2. There is a pure Nash equilibrium in the offline setting such that the utility of the Nash equilibrium has the same value as that of the online algorithm. 
% 3. If there are multiple pure Nash equilibria, it is not clear which one will people land up into. To avoid landing up into bad Nash equilibrium, the online algorithm is useful.
% 4. The online algorithm gives a $1+\delta$-approximation to optimal solution, hence the online solution is a $1+ \delta$-approximation to any Nash equilibrium.

% -----------------------

%minimize the number of days she reports as available while still being able to receive the resource. This corresponds to the reluctance of the agent to lose out on outside options that the agent could have exercised by not turning up to wait in line for the vaccine.
%Agents can be strategic by under-reporting the categories from which they qualify for the resource. Further, agents also have the option of marking only a subset of their actual available days as available.
\subsection{Pure Nash equilibrium}\label{sec:Nash-eq}
The offline algorithm might choose any arbitrary matching that maximizes the utility. We present a deterministic tie-breaking rule similar to the one used in \cite{Aziz21} to force the algorithm to pick a unique matching. For this, we fix an ordering \(\pi\) on agents. We show the existence of a pure Nash equilibrium under the deterministic tie-breaking. We cast our problem as a linear program as given in Fig~\ref{fig:lp1}.

\begin{figure}
\begin{alignat*}{2}
  & \text{maximize: } & & \smashoperator[l]{\sum_{\substack{i\in A, j\in C,\\ k\in D}}} u_{ik}.x_{ijk} \\
    & \text{subject to: }& \quad & \smashoperator[l]{\sum_{i\in A,j\in C}}
                    \begin{aligned}[t]
                        x_{ijk} & \le s_k,& \forall k & \in D\\[3ex]
                    \end{aligned}\\
    &             & \quad & \smashoperator[l]{\sum_{i\in A\ \ \ \ }}
                    \begin{aligned}[t]
                        x_{ijk} & \le q_{jk},& \forall (j,k) & \in C\times D\\[3ex]
                    \end{aligned}\\
    &             & \quad & \smashoperator[l]{\sum_{j\in C,k\in D}}
                    \begin{aligned}[t]
                        x_{ijk} & \in [0,1],& \forall i & \in A\\[3ex]
                    \end{aligned}\\
    &             & \quad & \quad\ \ \ \ \ \ \ \
                    \begin{aligned}[t]
                        x_{ijk} & \in [0,1],& \forall (i,j,k) & \in A\times C \times D\\[3ex]
                    \end{aligned}\\
\end{alignat*}
\caption{Here $u_{ik}$ is the utility value of agent $i$ on day $k$, and $ s_k \& q_{jk}$ are the daily supply and daily quotas respectively.}
\label{fig:lp1}
\end{figure}

 It can be seen that this LP models the network flow formulation of our problem stated in Section~\ref{sec:offline-optimal-algorithm}. It is known (\cite{lawler2001combinatorial}) that the polytope arising from the network flow problem is integral. To impose the deterministic tie breaking, we modify the objective function as follows. 
 
 \begin{align*}
  \text{maximize\ }& \smashoperator[l]{\sum_{\substack{i\in A, j\in C,\\ k\in D}}} u_{ik}.x_{ijk} + \lambda \times REG, \text{where}\\
  REG &= \sum_{i \in A} \frac{\sum_{k \in D, j \in C} x_{ijk}}{2^{\pi(i)}}
\end{align*}

For a sufficiently small \(\lambda\) \((\lambda < \delta^{|D|+1})\), the difference between utilities of any two allocations is greater than \(REG\). Therefore, the linear program in Figure~\ref{fig:lp1} maximizes the objective function in Fig~\ref{fig:lp1}, but breaks ties to maximize \(REG\).

Let \(A_{d_i}\) be defined as the set of agents matched on a day \(d_i\in D\) and \(A_\infty\) be the set of unmatched agents at the end of a run of the Algorithm~\ref{alg:online-greedy-m1}. Let agent \(a_p\) be matched on \(d_i\) (WLOG, assume all unmatched agents are matched on day \(\infty\). Now, we present a proof of Theorem~\ref{thm:nash}. 

\begin{proof}(of Theorem~\ref{thm:nash})
%In the profile described above, agent \(a_p\) is matched on \(d_i\).
Suppose the agent \(a_p\) is matched on day $d_i$, and deviates to reporting a subset of the actual available days. %Let \({D_{p_i}} \subseteq D_p\) be the days prior to \(d_i\).

If agent \(a_p\) gets matched on a day \(d_j\), $j<i$, because of misreporting her available days, then some agent \(a_q\) on day \(d_j\) will remain unmatched. This follows, since on any given day, the matching  computed by algorithm \ref{alg:online-greedy-m1} is of maximum size and all agents other than \(a_p\) turn up on at most one day. The rest of the matching will remain unchanged.
But, agent \(a_q\) is prioritized by \(\pi\) over agent \(a_p\). Otherwise, algorithm \ref{alg:online-greedy-m1} would have matched \(a_p\) and not \(a_q\) on day \(d_j\). Hence, agent \(a_p\) cannot replace agent \(a_q\) on day \(d_j\) even after misreporting her availability.

Therefore agent \(a_p\) has no advantage in deviating from the strategy. Hence, the above matching is a pure Nash equilibrium.
\end{proof}

% It is reasonable to believe agents are not computationally strong entities with good beliefs about the availability of other agents. Hence, coordinating on a pure Nash equilibrium that maximizes utility will be hard. Agents might play an equilibrium with a worse performance. But, they will not be able to play an equilibrium that is substantially better than the one described above (the utility of the above described equilibrium is exactly equal to the utility of the \(1+\delta\)--approximation algorithm \ref{alg:online-greedy-m1}).

% The online approximation reduces the scope for strategic behavior from agents. Agents, at any point must decide on signing up availability one day at a time. But, deciding this requires far less information than the one needed to decide its strategy while participating in the offline optimal.
% Thus, it might be desirable to run the online algorithm with strategic agents, despite the offline optimal possibly returning a better utility.
\section{Experimental Evaluation}\label{sec:simulation}
In Section~\ref{sec:model1} we prove worst-case guarantees for the online algorithm. We also give a tight example instance achieving a competitive ratio of \(1+2\delta\). Here, we experimentally evaluate the  performance of the online algorithm and compare it with the worst-case guarantees on a real-life dataset. For finding the optimal allocation that maximizes utility, we solve the  networkflow linear program  with the additional constraint for overall quota $\sum_{i\in A, k\in D} x_{ijk}\leq q_j\quad \forall c_j\in C$. This LP is described in the Appendix. The code and datasets for the experiments can be found at~\cite{exp-github}

\subsection{Methodology}
All experiments run on a 64-bit Ubuntu 20.04 desktop of 2.10GHz * 4 Intel Core i3 CPU with 8GB memory. 

The proposed online approximation algorithm runs in polynomial time. In contrast, the optimal offline algorithm solves an integer linear program which might take exponential time depending on the integrality of the polytope. We relax the integrality constraints to achieve an upper-bound on the optimal allocation. For comparing the performance of the online Algorithm~\ref{alg:online-greedy-m1} and the offline Algorithm, we use vaccination data of 24 hospitals in Chennai, India for the month of May 2022. We use small data-sets with varying instance sizes for evaluating the running times of the algorithms. We use large data-sets of smaller instance sizes for evaluating competitive ratios.      

All the programs used for the simulation are written in Python language. For solving LP, ILP, and LPR, we use the general mathematical programming solver COIN-OR Branch and Cut solver MILP (Version:~2.10.3)\cite{CBC} on PuLP (Version~2.6) framework\cite{pulp}. When measuring the running time, we consider the time taken to solve the LP. 

\subsection{Datasets}
Our dataset can be divided into two parts.

\textit{Supply:} We consider vaccination data of twenty four hospitals of Chennai, India for the month of May 2022. This data is obtained from the official COVID portal of India using the API's provided. The data-set consists of details such as daily vaccination availability, type of vaccines, age limit, hospital ID, hospital zip code, etc. for each hospital. 

\textit{Demand:} Using the Google Maps API \cite{google-maps-api}, we consider the road network for these 24 hospitals in our data-set. From this data we construct a complete graph with hospitals as vertices and edge weights as the shortest distance between any two hospitals. For each hospital \(h\in H\), we consider the cluster \(C(h)\) as the set of hospitals which are at most five kilo meters away from \(h\). We consider these clusters as our categories. Now, we consider 10000 agents who are to be vaccinated. For each agent \(a\), we pick a hospital \(h\) uniformly at random. The agent \(a\) belongs to every hospital in the cluster \(C(h)\). Each agent's availability over 30 days is independently sampled from the uniform distribution. Now, we consider the age wise population distribution of the city. For each agent we assign an age sampled from this distribution. Now, we partition the set of agents as agents of age 18-45years, 45-60years and 60+. We assign \(\alpha\)-values \(0.96,0.97 \text{ and } 0.99\) respectively. We also consider the same dataset with \(\alpha\)-values \(0.1,0.5 \text{ and } 0.9\) respectively. We set the discounting factor \(\delta\) to be 0.95.

For analyzing the running time of our algorithms, we use synthetically generated datasets with varying number of instance sizes ranging from 100 agents to 20000 agents. Each agent's availability and categories are chosen randomly from a uniform distribution.

\subsection{Results and Discussions}

We show that the online algorithm runs significantly faster than the offline algorithm while achieving almost similar results. We give a detailed emperical evaluation of the running times in the Appendix. 

\ \\
To compare the performance of the online Algorithm~\ref{alg:online-greedy-m1} against the offline algorithm we define a notion of \textit{remaining fraction of un-vaccinated agents}. That is, on a given day \(d_i\), we take the set of agents \(P_{d_i}\) who satisfy both of the following conditions:
\begin{enumerate}
\item Agent \(a\) is available on some day \(d_j\) on or before day \(d_i\).
\item Agent \(a\) belongs to some hospital \(h\) and \(h\) has non-zero capacity on day \(d_j\)
\end{enumerate}

\(P_{d_i}\) is the set of agents who could have been vaccinated without violating any constraints. Let \(\gamma_{i} = \lvert P_{d_i} \rvert\).

Let \(V_{d_i}\) be the set of agents who are vaccinated by the algorithm on or before day \(d_i\).  Let \(\eta_i = \lvert V_{d_i} \rvert\). Now, \(1-\eta_i / \gamma_i \) represents the fraction of unvaccinated agents. In Figure~\ref{fig:final_compare} we compare the age-wise  \(1-\eta_i / \gamma_i \) of both of our online and offline algorithms. We note that the vaccination priorities given to vulnerable groups by the online approximation algorithm is very close to that of the offline optimal algorithm. In both the algorithms, By the end of day 2, 50\% of \(1-\eta_i / \gamma_i \) was achieved for agents of 60+ age group. By the end of day 8, only 10\% of the most vulnerable group remained unvaccinated.

\begin{figure}
    \centering
\includegraphics[width=0.9\textwidth]{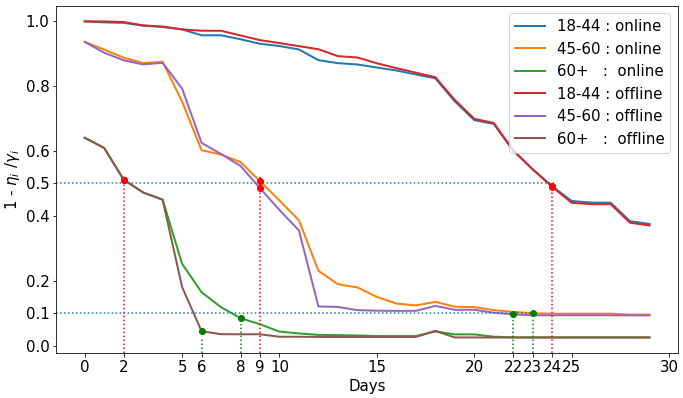}
    \caption{The \(1-\eta_i / \gamma_i \) value achieved by the online algorithm is very similar to that of the offline algorithm across age groups. Both algorithm vaccinate achieves vaccinate 90\% of the most vulnerable group within 8 days. }
    \label{fig:final_compare}
  \end{figure}

\subsection{Running Time Analysis}
In Table~\ref{table:final-table} we compare the performance of the online algorithm and the offline algorithm against the same dataset. We consider alpha values \((0.96,0.97,0.99)\) and \((0.1,0.5,0.9)\). In both the cases, the online algorithm vaccinates almost the same number of agents as that of the offline while algorithm achieving similar total utility. The competitive ratio is \(0.99\). The online algorithm runs significantly faster than the offline algorithm. 

\begin{minipage}{\textwidth}
\begin{minipage}{0.48\textwidth}
\scalebox{0.7}{
\begin{tabular}{|c|ll|ll|}
\hline
                                                                            & \multicolumn{2}{c|}{\begin{tabular}[c]{@{}c@{}}Online\\ Algorithm\end{tabular}} & \multicolumn{2}{c|}{\begin{tabular}[c]{@{}c@{}}Offline\\ Algorithm\end{tabular}} \\ \hline
\(\alpha\) value                                                            & \multicolumn{1}{l|}{\(\vec{\alpha_1}\)}           & \(\vec{\alpha_2}\)          & \multicolumn{1}{l|}{\(\vec{\alpha_1}\)}           & \(\vec{\alpha_2}\)           \\ \hline
\(\delta\)                                                                  & \multicolumn{1}{l|}{0.95}                         & 0.95                        & \multicolumn{1}{l|}{0.95}                         & 0.95                         \\ \hline
\begin{tabular}[c]{@{}c@{}}Running time \\ (in sec)\end{tabular}            & \multicolumn{1}{l|}{319.04}                       & 336.55                      & \multicolumn{1}{l|}{888.90}                       & 806.65                       \\ \hline
\begin{tabular}[c]{@{}c@{}}Total \\ no. of agents\\ vaccinated\end{tabular} & \multicolumn{1}{l|}{7154}                         & 7145                        & \multicolumn{1}{l|}{7192}                         & 7192                         \\ \hline
Total Utility                                                               & \multicolumn{1}{l|}{3567.95}                      & 1550.23                     & \multicolumn{1}{l|}{3580.68}                      & 1573.95                      \\ \hline
\end{tabular}}
\captionof{table}{ The vector \(\vec{\alpha_1} = \)  \((0.96, 0.97, 0.99)\) and vector \(\vec{\alpha_2} = \)  \((0.1, 0.5, 0.9)\) represent the \(alpha\) values for the three age groups . The average competitive ratio is \(0.99\). The average running time of the online and the offline algorithms are 327.79 seconds and 847.77 seconds respectively.  }
\label{table:final-table}
\end{minipage}\hfill
\begin{minipage}[t]{0.48\textwidth}
  \pgfplotstableread[row sep=\\,col sep=&]{
    size   & Off                & On                 \\
    100    & 0.019134283065796  & 0.019458115100861  \\
    500    & 0.138595390319824  & 0.116458749771118  \\
    1000   & 0.228429079055786  & 0.135910940170288  \\
    2000   & 0.592561912536621  & 0.356951093673706  \\
    5000   & 1.8075364112854    & 0.909572839736939  \\
    10000  & 4.43315873146057   & 1.69299368858337   \\
    20000  & 13.7637612819672   &  5.67742729187012  \\
    }\mydata
 \scalebox{0.7}{
	\begin{tikzpicture}
		\begin{axis}[
		            % title = Running time comparision,
		            enlarge x limits=0,
		            enlarge y limits=0.1,
		            xbar,
		            bar width=6pt,
		            symbolic y coords={100,500,1000,2000,5000,10000,20000},
		            ytick=data,
		            nodes near coords,
		            every node near coord/.append style={font=\scriptsize},
		            xmax=16,
		            legend style={at={(0.5,1.05), font=\scriptsize},
		            anchor=south,legend columns=-1},
		            ylabel={Size of Instance (number of agents)},
		            xlabel={Running time (in 50 sec)},
		            label style={font=\small},
		            tick label style={font=\small},
		        ]
		        \addplot[fill=blue] table[x=On,y=size]{\mydata};
		        \addplot[fill=red] table[x=Off,y=size]{\mydata};
		        \legend{Online Algorithm, Offline Algorithm}
		    \end{axis}  
	\end{tikzpicture}}
	
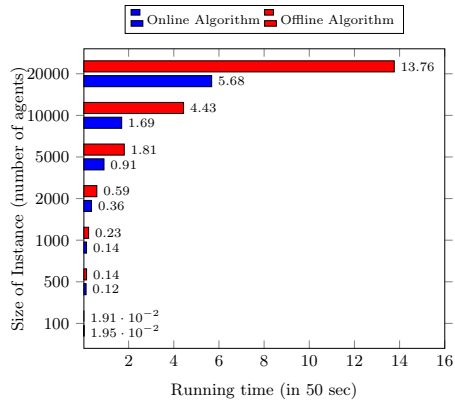
\captionof{figure}{Time taken by offline and online algorithms (on synthetic datasets) vs instance size}
	\label{fig:time_graph}
\end{minipage}
\end{minipage}

  Comparing the running time of Algorithm~\ref{alg:online-greedy-m1} and the offline algorithm,  Figure~\ref{fig:time_graph} shows that the online algorithm runs significantly faster than the offline algorithm for all input sizes.

\subsection{Performance Analysis}
In Figure~\ref{fig:age_0_compare_alphas}, we plot the number of agents of age group 18-45 getting vaccinated by the online algorithm~\ref{alg:online-greedy-m1} on each day for \(alpha\) values \(0.96\)  and \(0.1\). It is clear that the vaccination follows almost identical pattern as long as the order of \(alpha\) values remain the same. Figure~\ref{fig:age_0_compare_alphas_offline} shows similar results for the optimal offline algorithm. The independence on cardinal values shows that the algorithm is practically useful as ordering the vulnerable groups is much more feasible than assigning a particular value.  Similar plots for other age groups are given in the appendix.

  \begin{figure}
    \centering
    \includegraphics[width=1\textwidth]{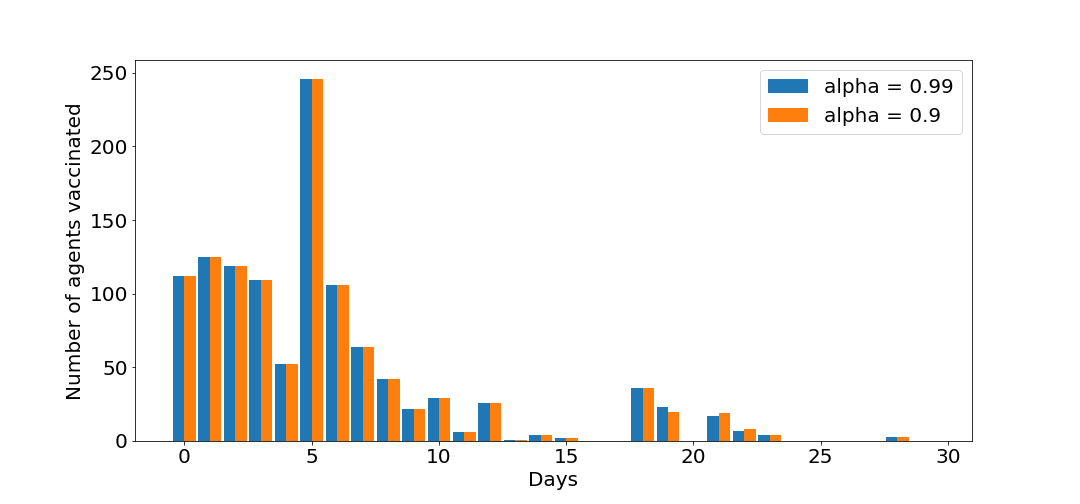}
    \caption{Number of agents in the 60+ age group vaccinated by the online algorithm for \(alpha\)-values \(0.96\) and \(0.1\) respectively.}
    \label{fig:age_0_compare_alphas}
  \end{figure}

  \begin{figure}
    \centering
    \includegraphics[width=1\textwidth]{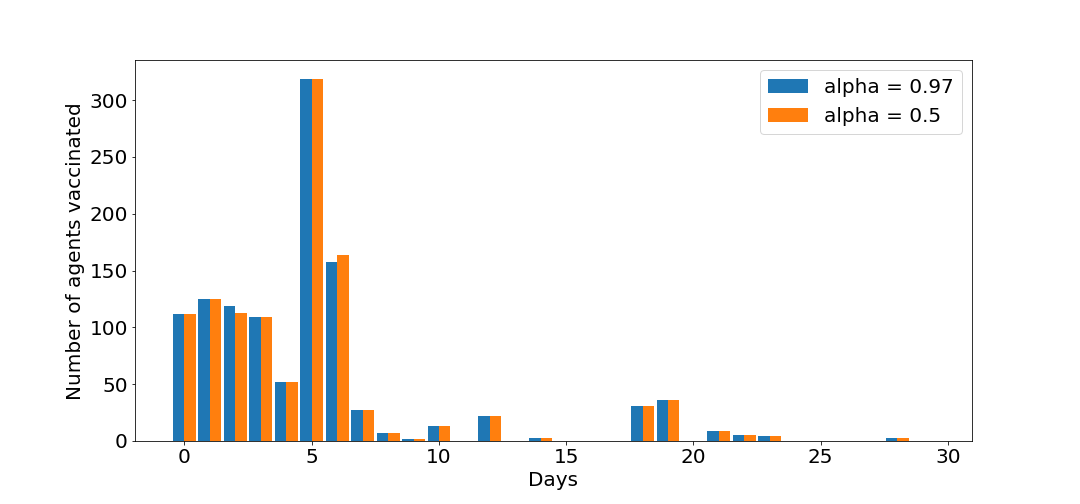}
    \caption{Number of agents in the 60+ age group vaccinated by the offline algorithm for \(alpha\)-values \(0.96\) and \(0.1\) respectively.}
    \label{fig:age_0_compare_alphas_offline}
  \end{figure}

In Figure~\ref{fig:age_1_compare_alphas}, we plot the number of agents of age group 45-60 getting vaccinated by the online algorithm 1 on each day for alpha values 0.97 and 0.5. It is clear that the vaccination follows almost identical pattern as long as the order of alpha values remain the same. Figure~\ref{fig:age_1_compare_alphas_offline} shows similar results for the optimal offline algorithm. Figure~\ref{fig:age_2_compare_alphas} and Figure~\ref{fig:age_2_compare_alphas_offline} plot similar results for the 60+ age group population. We note that in both online and the offline algorithm,  allocations of vaccines for the age group 60+ are higher in the initial days and decreases with days. Most of the agents from this group are vaccinated by the end of 10th day.

  \begin{figure}
    \centering
    \includegraphics[width=1\textwidth]{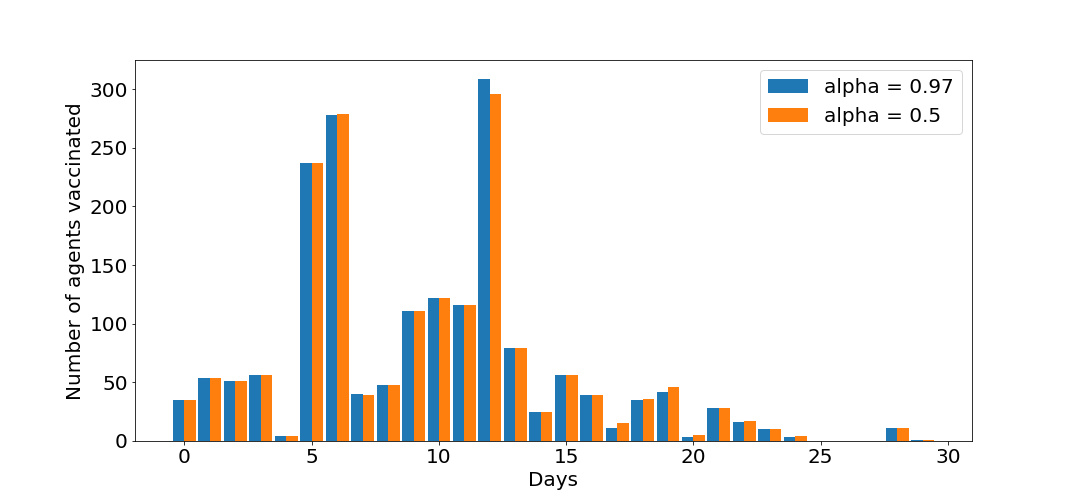}
    \caption{Number of agents in the 45-60 age group vaccinated by the online algorithm for \(alpha\)-values \(0.97\) and \(0.5\) respectively.}
    \label{fig:age_1_compare_alphas}
  \end{figure}

  \begin{figure}
    \centering
    \includegraphics[width=1\textwidth]{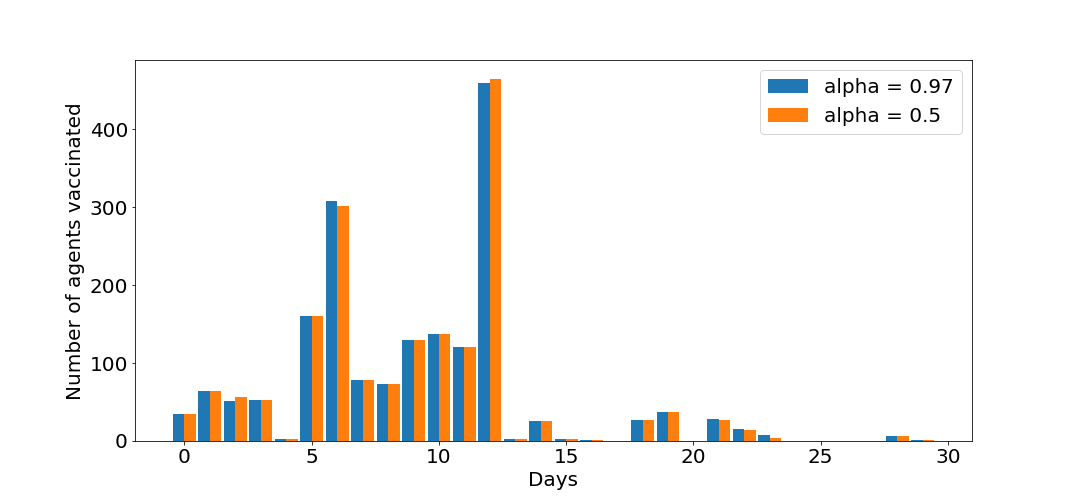}
    \caption{Number of agents in the 45-60 age group vaccinated by the offline algorithm for \(alpha\)-values \(0.97\) and \(0.5\) respectively.}
    \label{fig:age_1_compare_alphas_offline}
  \end{figure}
  
    \begin{figure}
    \centering
    \includegraphics[width=1\textwidth]{age_2_compare_alphas}
    \caption{Number of agents in the 60+ age group vaccinated by the online algorithm for \(alpha\)-values \(0.99\) and \(0.9\) respectively.}
    \label{fig:age_2_compare_alphas}
  \end{figure}

  \begin{figure}
    \centering
    \includegraphics[width=1\textwidth]{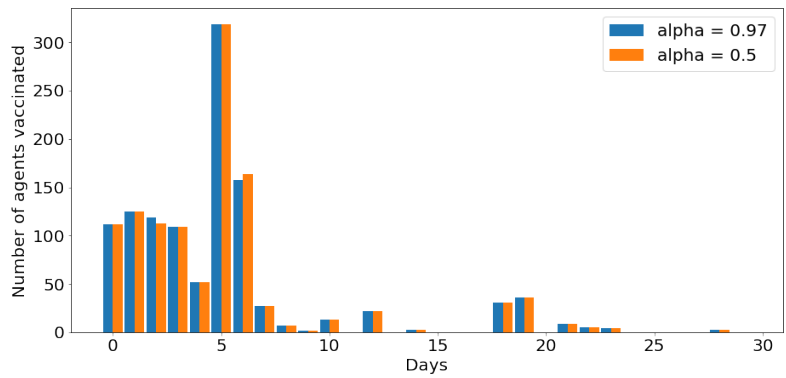}
    \caption{Number of agents in the 60+ age group vaccinated by the offline algorithm for \(alpha\)-values \(0.99\) and \(0.9\) respectively.}
    \label{fig:age_2_compare_alphas_offline}
  \end{figure}

\section{Conclusion}\label{sec:conclusion}
We investigate the problem of dynamically allocating perishable healthcare goods to agents arriving over a period of time. We capture various constraints while allocating a scarce resource to a large population, like production constraint on the resource, infrastructure and constraints. 
While we give an offline optimal algorithm for Model $1$, getting one for Model $2$ or showing NP-hardness remains open.
%We proposed an offline algorithm that maximizes welfare that requires agents to declare their availability for the entire schedule upfront. 
We also propose an online algorithm approximating welfare that elicits information every day and makes an immediate decision. The online algorithm does not require a foresight and hence has a practical appeal.% We further run various experiments to compare the average-case performance of the proposed online algorithm against the offline algorithm. We observed 
Our experiments show that the online algorithm generates a utility roughly equal to the utility of the offline algorithm while achieving very little to no wastage. % Further, we observed that a high fraction of the perishable good is utilized and is not discarded at the end of the day in both the online and the offline algorithm.
\bibliographystyle{splncs04}
\bibliography{main}

\begin{thebibliography}{10}
\providecommand{\url}[1]{\texttt{#1}}
\providecommand{\urlprefix}{URL }
\providecommand{\doi}[1]{https://doi.org/#1}

\bibitem{ahani2021dynamic}
Ahani, N., G{\"o}lz, P., Procaccia, A.D., Teytelboym, A., Trapp, A.C.: Dynamic
  placement in refugee resettlement. arXiv preprint arXiv:2105.14388  (2021)

\bibitem{Ahuja93}
Ahuja, R., Magnanti, T., Orlin, J.: Network Flows: Theory, Algorithms, and
  Applications. Prentice Hall (1993)

\bibitem{andersson2016assigning}
Andersson, T., Ehlers, L.: Assigning refugees to landlords in sweden: Stable
  maximum matchings. Cahier de recherche  \textbf{122}(3),  937--965 (2016)

\bibitem{andersson2018dynamic}
Andersson, T., Ehlers, L., Martinello, A.: {Dynamic Refugee Matching}. Working
  Papers 2018:7, Lund University, Department of Economics (Mar 2018)

\bibitem{Aziz21}
Aziz, H., Brandl, F.: Efficient, Fair, and Incentive-Compatible Healthcare
  Rationing, p. 103–104. Association for Computing Machinery, New York, NY,
  USA (2021)

\bibitem{aziz2018stability}
Aziz, H., Chen, J., Gaspers, S., Sun, Z.: Stability and pareto optimality in
  refugee allocation matchings. In: Proceedings of the 17th International
  Conference on Autonomous Agents and MultiAgent Systems. pp. 964--972 (2018)

\bibitem{bansak2018improving}
Bansak, K., Ferwerda, J., Hainmueller, J., Dillon, A., Hangartner, D.,
  Lawrence, D., Weinstein, J.: Improving refugee integration through
  data-driven algorithmic assignment. Science  \textbf{359}(6373),  325--329
  (2018)

\bibitem{Biro10}
Biró, P., Fleiner, T., Irving, R.W., Manlove, D.F.: The college admissions
  problem with lower and common quotas. Theoretical Computer Science
  \textbf{411}(34),  3136--3153 (2010)

\bibitem{Bruce21}
Bruce, L., Tallman, R.: Promoting racial equity in covid-19 resource
  allocation. Journal of Medical Ethics  \textbf{47}(4),  208--212 (2021)

\bibitem{CBC}
{COIN-OR} branch and cut solver. \url{https://coin-or.github.io/Cbc/intro.html}
  (2022), accessed: 2022-08-15

\bibitem{cilali2021location}
Cilali, B., Barker, K., Gonz{\'a}lez, A.D.: A location optimization approach to
  refugee resettlement decision-making. Sustainable Cities and Society
  \textbf{74},  103153 (2021)

\bibitem{WP}
\url{https://www.covid19reservesystem.org/media} (2022), accessed: 2022-01-14

\bibitem{delacretaz2016refugee}
Delacr{\'e}taz, D., Kominers, S.D., Teytelboym, A.: Refugee resettlement.
  University of Oxford Department of Economics Working Paper  (2016)

\bibitem{delacretaz2019matching}
Delacrétaz, D., Kominers, S.D., Teytelboym, A.: {Matching Mechanisms for
  Refugee Resettlement}. Working Papers 2019-078, Human Capital and Economic
  Opportunity Working Group (Dec 2019)

\bibitem{Emanuel20}
Emanuel, E.J., Persad, G., Upshur, R., Thome, B., Parker, M., Glickman, A.,
  Zhang, C., Boyle, C., Smith, M., Phillips, J.P.: Fair allocation of scarce
  medical resources in the time of covid-19. New England Journal of Medicine
  \textbf{382}(21),  2049--2055 (2020)

\bibitem{Fink20}
Fink, S.: The hardest questions doctors may face: Who will be saved? who won't?
  \it The New York Times  (Mar 2020)

\bibitem{Kamiyama16}
Fleiner, T., Kamiyama, N.: A matroid approach to stable matchings with lower
  quotas. Math. Oper. Res.  \textbf{41}(2),  734--744 (2016)

\bibitem{GS62}
Gale, D., Shapley, L.S.: College admissions and the stability of marriage. The
  American Mathematical Monthly  \textbf{69}(1),  9--15 (1962)

\bibitem{NASEM20}
Gayle, H., Foege, W., Brown, L., Kahn, B. (eds.): Framework for Equitable
  Allocation of COVID-19 Vaccine. National Academies of Sciences Engineering
  and Medicine (2020)

\bibitem{google-maps-api}
{Google} maps platform. \url{https://developers.google.com/maps} (2022),
  accessed: 2022-08-15

\bibitem{Goto16}
Goto, M., Iwasaki, A., Kawasaki, Y., Kurata, R., Yasuda, Y., Yokoo, M.:
  Strategyproof matching with regional minimum and maximum quotas. Artificial
  Intelligence  \textbf{235},  40--57 (2016)

\bibitem{Hamada16}
Hamada, K., Iwama, K., Miyazaki, S.: The hospitals/residents problem with lower
  quotas. Algorithmica  \textbf{74}(1),  440–465 (jan 2016)

\bibitem{Huang10}
Huang, C.: Classified stable matching. In: Proceedings of the Twenty-First
  Annual {ACM-SIAM} Symposium on Discrete Algorithms, {SODA} 2010. pp.
  1235--1253 (2010)

\bibitem{exp-github}
{HV, Vishwa Prakash}: Fair\_healthcare\_rationing,
  \url{{https://github.com/severus-tux/Fair\_Healthcare\_Rationing}}

\bibitem{jones2017international}
Jones, W., Teytelboym, A.: The international refugee match: A system that
  respects refugees’ preferences and the priorities of states. Refugee Survey
  Quarterly  \textbf{36}(2),  84--109 (2017)

\bibitem{jones2018local}
Jones, W., Teytelboym, A.: The local refugee match: Aligning refugees’
  preferences with the capacities and priorities of localities. Journal of
  Refugee Studies  \textbf{31}(2),  152--178 (2018)

\bibitem{Kamada15}
Kamada, Y., Kojima, F.: Efficient matching under distributional constraints:
  Theory and applications. American Economic Review  \textbf{105}(1),  67--99
  (Jan 2015)

\bibitem{Kamada17}
Kamada, Y., Kojima, F.: Recent developments in matching with constraints.
  American Economic Review  \textbf{107}(5),  200--204 (May 2017)

\bibitem{Kavitha14}
Kavitha, T.: A size-popularity tradeoff in the stable marriage problem. SIAM
  Journal on Computing  \textbf{43}(1),  52--71 (2014)

\bibitem{Kojima19}
Kojima, F.: New Directions of Study in Matching with Constraints, pp. 479--482.
  Future of Economic Design (Nov 2019)

\bibitem{lawler2001combinatorial}
Lawler, E.L.: Combinatorial optimization: networks and matroids. Courier
  Corporation (2001)

\bibitem{Moulin03}
Moulin, H.: Fair division and collective welfare. MIT press (2004)

\bibitem{NasreR17}
Nasre, M., Rawat, A.: Popularity in the generalized hospital residents setting.
  In: Computer Science - Theory and Applications - 12th International Computer
  Science Symposium in Russia. pp. 245--259 (2017)

\bibitem{nguyen2021stability}
Nguyen, H., Nguyen, T., Teytelboym, A.: Stability in matching markets with
  complex constraints. Management Science  \textbf{67}(12),  7438--7454 (2021)

\bibitem{Pathak20}
Pathak, P.A., Sönmez, T., Ünver, M.U., Yenmez, M.B.: {Fair Allocation of
  Vaccines, Ventilators and Antiviral Treatments: Leaving No Ethical Value
  Behind in Health Care Rationing}. Boston College Working Papers in
  Economics~1015, Boston College Department of Economics (Jul 2020)

\bibitem{Persad20}
Persad, G., Peek, M.E., Emanuel, E.J.: {Fairly Prioritizing Groups for Access
  to COVID-19 Vaccines}. JAMA  \textbf{324}(16),  1601--1602 (Oct 2020)

\bibitem{pulp}
{PuLP} optimization. \url{https://coin-or.github.io/pulp/} (2022), accessed:
  2022-01-14

\bibitem{Ramsey1928}
Ramsey, F.P.: {A Mathematical Theory of Saving}. The Economic Journal
  \textbf{38}(152),  543--559 (12 1928)

\bibitem{Roth90}
Roth, A.E., Sotomayor, M.A.O.: Two-Sided Matching: A Study in Game-Theoretic
  Modeling and Analysis. Econometric Society Monographs, Cambridge University
  Press (1990)

\bibitem{Samuelson1937}
Samuelson, P.A.: {A Note on Measurement of Utility}. The Review of Economic
  Studies  \textbf{4}(2),  155--161 (02 1937)

\bibitem{Sankar21}
Sankar, G.S., Louis, A., Nasre, M., Nimbhorkar, P.: Matchings with group
  fairness constraints: Online and offline algorithms. In: Proceedings of the
  Thirtieth International Joint Conference on Artificial Intelligence, {IJCAI}
  2021,. pp. 377--383 (2021)

\bibitem{sayedahmed2020refugee}
Sayedahmed, D.: Refugee Settlement and Other Matching Problems with Priority
  Classes and Reserves: A Market Design Perspective. Ph.D. thesis, Concordia
  University (2020)

\bibitem{Sonmez20}
S{\"o}nmez, T., Pathak, P.A., {\"U}nver, M.U., Persad, G., Truog, R.D., White,
  D.B.: Categorized priority systems: A new tool for fairly allocating scarce
  medical resources in the face of profound social inequities. Chest
  \textbf{159}(3),  1294--1299 (2021)

\bibitem{Truog20}
Truog, R.D., Mitchell, C., Daley, G.Q.: The toughest triage -- allocating
  ventilators in a pandemic. New England Journal of Medicine  \textbf{382}(21),
   1973--1975 (2020)

\bibitem{WHO20}
{W}{H}{O}: A global framework to ensure equitable and fair allocation of
  covid-19 products and potential implications for covid-19 vaccines. Tech.
  rep., World Health Organization (2020)

\end{thebibliography}
\clearpage

% \bibliography{references}

\end{document}